\newtheorem{lemma}{Lemma}[section]
\newtheorem{theorem}{Theorem}[section]
\newtheorem{proposition}{Proposition}[section]
\newtheorem{remark}{Remark}[section]
\def \b#1{\bar{#1}}
\def \deg{\mathrm{deg\,}}
\newcommand{\bsb}{\begin{subequations}}
\newcommand{\esb}{\end{subequations}}
\def \llbr{[\![}
\def \rrbr{]\!]}
\newcommand{\bred}{\begin{color}{red}}
\newcommand{\ecl}{\end{color}}
\newcommand{\bblue}{\begin{color}{blue}}
\newcommand{\bgre}{\begin{color}{green}}
\newcommand{\bora}{\begin{color}{orange}}
\numberwithin{equation}{section}
\title{Symmetries of the D$\Delta$mKP hierarchy and their continuum limits}
\author{Jin Liu$^1$, ~ ~  Da-jun Zhang$^1$\footnote{Corresponding author. Email: djzhang@staff.shu.edu.cn},
~~ Xuehui Zhao$^{2,3}$\\
{\small  $^{1}$Department of Mathematics,
 Shanghai University, Shanghai 200444, China}\\
 {\small  $^{2}$College of Mathematical Science, Inner Mongolia Normal University,  Hohhot 010022, China}\\
 {\small  $^{3}$Center for Applied Mathematical Science, Inner Mongolia Normal University,
  Hohhot 010022, China}}
\date{\today}
\begin{document}
	
\maketitle

\begin{abstract}
In the recent paper [Stud. App. Math. 147 (2021) 752], squared eigenfunction symmetry constraint of
the differential-difference modified Kadomtsev-Petviashvili (D$\Delta$mKP) hierarchy
converts the D$\Delta$mKP system to the relativistic Toda spectral problem and its hierarchy.
In this paper we introduce a new formulation of independent variables
in the squared eigenfunction symmetry constraint,
under which the D$\Delta$mKP system gives rise to the discrete spectral problem and a hierarchy of
the differential-difference derivative nonlinear Schr\"odinger equation of the Chen-Lee-Liu type.
In addition, by introducing   nonisospectral flows,
two sets of  symmetries of the D$\Delta$mKP hierarchy and their algebraic structure are obtained.
We then present a unified continuum limit scheme,
by which we achieve  the correspondence of the mKP and the D$\Delta$mKP hierarchies
and their integrable structures.

\begin{description}
\item[Keywords:] D$\Delta$mKP hierarchy,  eigenfunction, symmetry,
constraint, derivative nonlinear Schr\"odinger, continuum limit
\item[PACS numbers:] 02.30.Ik, 02.30.Ks, 05.45.Yv
\end{description}
\end{abstract}

%\tableofcontents
%
%\newpage

\section{Introduction}\label{sec-1}

The Kadomtsev-Petviashvili (KP) hierarchy and modified KP (mKP) hierarchy
are elementary continuous (2+1)-dimensional integrable systems.
They are respectively formulated as the compatible conditions of
\begin{subequations}\label{KP-lax}
\begin{align}
& \mathcal{L}\psi =\lambda \psi,~~
\mathcal{L}=\partial_x+u_2  \partial^{-1}_x +u_3 \partial^{-2}_x +\cdots, \label{KP-L}\\
& \psi_{t_m}=\mathcal{A}_m\psi,~~ \mathcal{A}_m= ( \mathcal{L}^m)_{\geq 0},
\end{align}
\end{subequations}
and
\begin{subequations}\label{mKP-lax}
\begin{align}
& L\phi =\lambda \phi,~~
L=\partial_x+v_0 +v_1 \partial^{-1}_x +v_2 \partial^{-2}_x +\cdots, \label{mKP-lax-L}\\
& \phi_{t_m}=A_m \phi, ~~ A_m=( L^m)_{\geq 1}.
\end{align}
\end{subequations}
Note that most of the involved notations can be referred from Sec.\ref{sec-2} of this paper.
It is well known that
the squared eigenfunction symmetry constraints bridge  a gap
between  (2+1)-dimensional and (1+1)-dimensional integrable systems,
which allows us to study (2+1)-dimensional equations using  (1+1)-dimensional ones,
and vice versa.
By  squared eigenfunction symmetry constraints,
\eqref{KP-lax} together with the adjoint form $\psi^*_{t_m}=-\mathcal{A}^*_m\psi^*$,
namely, the KP system, gives rise to the
Ablowitz-Kaup-Newell-Segur (AKNS) spectral problem and the AKNS hierarchy
\cite{Cao-1990,Cheng-1991,KSS-1991,KS-1991};
while for the mKP system composed by \eqref{mKP-lax} together with
the adjoint form $\phi^*_{t_m}=-A^*_m\phi^*$,
gives rise to the Kaup-Newell (KN) spectral problem and the derivative nonlinear Schr\"odinger (DNLS) hierarchy
of the Chen-Lee-Liu (CLL) type \cite{Chen-mKP-2002}.
More details about the above links can also be found in \cite{Chen-2006}.

For the KP system, such links have been extended to the differential-difference case,
namely, the differential-difference KP (D$\Delta$KP) system, composed by
\begin{subequations}\label{dKP-lax}
\begin{align}
& \bar{\mathcal{L}}\bar \psi_n =\lambda \bar\psi_n,~~
\bar{\mathcal{L}}=\Delta+\bar u_0 +  \bar u_1 \Delta^{-1} + \bar u_2 \Delta^{-2} +\cdots, \label{dKP-L}\\
& \bar\psi_{n,t_m}=\bar{\mathcal{A}}_m\bar\psi_n,~~
\bar \psi_{n,t_m}^*=\bar{\mathcal{A}}_m^*\bar\psi_n^*,~~
\bar{\mathcal{A}}_m= ( \bar{\mathcal{L}}^m)_{\geq 0}.
\end{align}
\end{subequations}
It turns out that, by the squared eigenfunction symmetry constraint,
the  D$\Delta$KP system is converted to the Merola-Ragnisco-Tu spectral problem and
a semi-discrete AKNS hierarchy \cite{Chen-JNMP-2017}.
Note that the Merola-Ragnisco-Tu spectral problem \cite{MRT-1994} (see also \cite{DJM-1983})
is a second discretization of the AKNS spectral problem (see \cite{Chen-JNMP-2017}),
which is also known as the
Darboux transformation of the AKNS spectral problem \cite{AY-JPA-1994},
but is different from the familiar discretization, namely, the Ablowitz-Ladik spectral problem \cite{AL-1975}.

In a recent paper \cite{CZZ-2021} the squared eigenfunction symmetry constraint for the
differential-difference modified KP (D$\Delta$mKP) hierarchy was investigated.
The later is related to the spectral problem
\begin{equation}
\bar{L}\bar \phi_n =\lambda \bar \phi_n,~~
\bar{L}=\bar v \Delta+\bar v_0 + \bar v_1 \Delta^{-1} + \bar v_2 \Delta^{-2} +\cdots. \label{dmKP-L1}
\end{equation}
It is surprising as well as interesting that with the constraint converts the D$\Delta$mKP system
to the spectral problem and hierarchy of  the  relativistic Toda,
not as we expected to the semi-discrete  DNLS system of the CLL type.

In the present paper we will give a new formulation of independent variables in
the squared eigenfunction symmetry constraint
such that the constrained  D$\Delta$mKP system can yield
a semi-discrete system that matches the DNLS system in a unified continuum limit.
Here by ``\emph{unified}'' we want to emphasize that
one should use a same continuum limit scheme to recover continuous objects from their
semi-discrete counterparts, including equation hierarchy, their integrable characteristics and their algebraic structures,
e.g.\cite{ZC-2010,FQSZ-2015} for the semi-discrete AKNS and \cite{FHTZ-2013,Chen-JNMP-2017}
for the D$\Delta$KP.
Note that sometimes  combinations of objects  are needed, e.g. \cite{FQSZ-2015,MP-1996}.
In this paper, apart from reinvestigating the  squared eigenfunction symmetry constraint,
we will also derive nonisospectral D$\Delta$mKP flows which not only provide a master symmetry
but also lead to a set of symmetries for the D$\Delta$mKP hierarchy.
Finally, we will provide a unified continuum limit scheme,
with which we will show the correspondence between all the above semi-discrete objects
and their continuous counterparts.

The paper is organized as follows.
Section \ref{sec-2} serves as the preliminary to introduce some notions and notations.
In Section \ref{sec-3}, we quickly review the  mKP hierarchy, their symmetries
and the squared eigenfunction symmetry constraint of the mKP system.
Then in Section \ref{sec-4}, we derive nonisospectral flows and symmetries for the D$\Delta$mKP heirarchy
and present their algebraic structures.
In addition, as the main result  of the paper, we give a new formulation of the squared eigenfunction
symmetry constraint and derive the differential-difference CLL spectral problem and
the differential-difference DNLS hierarchy from the constrained D$\Delta$mKP system.
After that, in Section \ref{sec-5}, we present a unified scheme of the continuum limits and show that
the obtained results and their algebraic structures can
well match their continuous  counterparts given in Section \ref{sec-3}.
Finally, in Section \ref{sec-6}, conclusions are given.
There are two appendices. One provides a detailed proof of formula \eqref{vs-rhs},
and in the another, we discuss discrete CLL spectral problem and the related discrete integrable systems.

\section{Preliminary}\label{sec-2}

We briefly list some notions and notations used in our paper.
One can also refer to  \cite{FF-1981} and \cite{FHTZ-2013}.
Let $v= v(x,y,t)$ be a $C^{\infty}$ function defined on $\mathbb{R}^3$  and decrease rapidly
as $|x|, |y| \to \infty$.
By $S[v]$ we denote a Schwartz space composed by all functions $f(v)$
that are $C^{\infty}$ differentiable with respect to $v$ and its derivatives.
For two functions $f,g \in S[v]$, the G$\hat{\mathrm{a}}$teaux derivative of $f$ with respect to $v$ in direction $g$
is defined as
\begin{equation*}
f'[g]\doteq f'(v)[g]=\frac{\mathrm{d}\, f(v+\varepsilon g)}{\mathrm{d}\,\varepsilon}\Big|_{\varepsilon=0},
\end{equation*}
using which a Lie product $\llbracket \cdot,\cdot \rrbracket$ is defined as
\begin{equation}\label{lie}
\llbracket f,g\rrbracket=f'[g]-g'[f].
\end{equation}
For an  evolution equation
\begin{equation}\label{vt-K}
v_t=K(v)
\end{equation}
where $K(v)\in S[v]$,  we say $\omega=\omega(v)$ is its symmetry
if for all solutions $v$ of equation \eqref{vt-K} there is
\begin{equation*}
\omega_t=K'[\omega].
\end{equation*}
In the pseudo-differential operator $\mathcal{L}$ given in \eqref{KP-L},
$\partial_x=\frac{\partial}{\partial x}$ and $\partial^{-1}_x \partial_x= \partial_x \partial^{-1}_x=1$.
For two functions $f,g \in S[v]$, their inner product is defined as
\begin{equation*}
\langle f, g \rangle=\int_{-\infty}^{\infty}\int_{-\infty}^{\infty} fg \, \mathrm{d}x \mathrm{d}y.
\end{equation*}
For an operator $T$ living on $S[v]$, its adjoint operator $T^*$ is defined through
\begin{equation}\label{inn}
\langle f,T g\rangle=\langle T^* f,g\rangle.
\end{equation}

In the semi-discrete (differential-difference) case,
suppose that $\tilde{v}=\tilde{v}(n,\bar x,\bar t)$ is a function
of $(n,\bar x,\bar t) \in \mathbb{Z} \times \mathbb{R}^2$, $C^{\infty}$ with respect to $(\bar x, \bar t)$,
and decreases rapidly as $|n|, |\bar x| \to \infty$.
Similar to the continuous case, one can introduce  a Schwartz function space $\bar S[\tilde v]$
and define G$\hat{\mathrm{a}}$teaux derivative, Lie product, and symmetry
for an evolution equation $\tilde v_{\bar t}=K(\tilde v)$.
Without confusion we write $f(\tilde{v})=f_n$.
By $\Delta$ and $E$ we denote the difference operator and shift operator with respect to $n$,
i.e. $\Delta f_n=(E-1) f_n=f_{n+1}-f_n$. There is an extended Leibniz rule for $\Delta$ \cite{ES-2011},
\begin{equation}\label{2.7}
\Delta^s g_n=\sum_{i=0}^{\infty} \mathrm{C}_s^i(\Delta^i g_{n+s-i})\Delta^{s-i},~~~ s\in \mathbb{Z},
\end{equation}
where
\begin{equation*}
\mathrm{C}_0^0=1,~~ \mathrm{C}_s^i=\frac{s(s-1)(s-2)\cdots(s-i+1)}{i!}.
\end{equation*}
An inner product of two functions in $\bar S[\tilde v]$ is defined as
\begin{equation}\label{2.8}
\langle f_n,g_n\rangle=\sum_{n=-\infty}^{\infty}\int_{-\infty}^{\infty}f_n g_n\, \mathrm{d}\bar x,
\end{equation}
and for an operator $T$  its adjoint operator $T^*$ is defined as
$\langle f_n, \bar T g_n\rangle=\langle \bar T^* f_n,g_n\rangle$.

\section{Results of the mKP}\label{sec-3}

In the following we quickly review the  mKP hierarchy, their symmetries and algebraic structures
(see also \cite{DCZ-2017}),
and the squared eigenfunction symmetry constraint of the mKP system (see also \cite{Chen-mKP-2002,Chen-2006}).
These will be finally recovered from the results of the D$\Delta$mKP using a unified continuum limit.

\subsection{The scalar mKP hierarchy and symmetries}\label{sec-3-1}

Consider the Lax triad\footnote{We should consider $y$ and $t_2$ independently,
especially in the nonisospectral case, see \cite{FHTZ-2013}.} \cite{DCZ-2017}
\begin{subequations}\label{mkp-lax-iso}
\begin{align}
&L\phi=\lambda \phi, ~~~\lambda_{t_m}=0,\\
&\phi_y=A_2 \phi,\\
&\phi_{t_m}=A_m \phi, ~~ (m=1,2,\cdots),
\end{align}
\end{subequations}
where $L$ is the pseudo-differential operator given in \eqref{mKP-lax-L}
with $v_0\equiv v,~ v_j\in S[v]$, $A_m=(L^m)_{\geq 1}$ which contains the pure differential part of $L^m$,
e.g.,
\begin{subequations}\label{mkp-am}
\begin{align}
&A_1=\partial_x,\\
&A_2=\partial^2_x+2v_0\partial_x,\\
&A_3=\partial^3_x+3v_0\partial^2_x+3(v_1+v_{0,x}+v_0^2)\partial_x.
\end{align}
\end{subequations}
Compatibility of the triplet \eqref{mkp-lax-iso} gives rise to
\begin{subequations}
	\begin{align}
		&L_y=[A_2,L],\label{mkp-iso-Ly}\\
		&L_{t_m}=[A_m,L],\label{mkp-iso-Ltm}\\
		&A_{2,t_m}-A_{m,y}+[A_2,A_m]=0,\label{mkp-iso-zcc}
	\end{align}
\end{subequations}
where $[A,B]=AB-BA$.
Among them Eq.\eqref{mkp-iso-Ly} plays a role to express $v_j$ in terms of $v_0\equiv v$ for $j=1,2,\cdots$,
Eq.\eqref{mkp-iso-Ltm} agrees with $A_m=(L^m)_{\geq 1}$ with the boundary condition
$A_m |_{\boldsymbol{v=0}}=\partial^m_x$ where $\boldsymbol{v}=(v_0, v_1, v_2, \cdots)$
(see the following Remark \ref{Rem-3-1}),
and the third equation \eqref{mkp-iso-zcc} gives rises to the zero curvature representation
of the scalar isospectral mKP hierarchy
\begin{equation}\label{mkp-h}
v_{t_m}=K_m=\frac{1}{2}(A_{m,y}+[A_m,A_2])\partial^{-1}_x,~~~m=1, 2, \cdots.
\end{equation}
Here we list out the first three equations in the hierarchy:
\begin{subequations}\label{3.5}
\begin{align}
&v_{t_1}=K_1=v_x,\\
&v_{t_1}=K_2=v_y,\\
&v_{t_3}=K_3=\frac{1}{4}v_{xxx}-\frac{3}{2}v^2 v_x+\frac{3}{2}v_x \partial^{-1}_x v_y +\frac{3}{4}\partial^{-1}v_{yy},
\end{align}
\end{subequations}
where the third equation is known as the mKP equation.

\begin{remark}\label{Rem-3-1}
If we assume $A_m$ has a form
\[A_m=  a_0  \partial_x^m+  a_1  \partial_x^{m-1}+ \cdots + a_{m-1}  \partial_x,~~
A_m |_{\boldsymbol{v=0}}=\partial^m_x,
\]
then $\{a_j\}$ can be uniquely determined from \eqref{mkp-iso-Ltm} and it turns out that $A_m=(L^m)_{\geq 1}$.
See also \cite{ZC-2003}.
\end{remark}

The nonisospectral mKP flows are formulated from the triad \cite{DCZ-2017}
\begin{subequations}\label{mkp-lax-non}
\begin{align}
&L\phi=\lambda \phi, ~~~\lambda_{t_m}=\lambda^{m-1},\\
&\phi_y=A_2 \phi,\\
&\phi_{t_m}=B_m \phi, ~~ (m=1,2,\cdots),
\end{align}
\end{subequations}
where $L$ and $A_2$ are the same as the isospectral case, while $B_m$ is a differential operator
\begin{equation*}
B_m=b_0\partial^m_x+b_1\partial^{m-1}_x+\cdots+b_{m-1}\partial_x,
\end{equation*}
with assumption
\begin{equation}\label{Bm-b}
B_m|_{\boldsymbol{v}=0}=
\begin{cases}
2y\partial_x, & m=1,\\
2y\partial^m_x+x\partial^{m-1}_x,& m \ge 2.
\end{cases}
\end{equation}
The compatibility of \eqref{mkp-lax-non} leads to
\begin{subequations}
\begin{align}
&L_y=[A_2,L],\label{mkp-non-Ly}\\
&L_{t_m}=[B_m,L]+L^{m-1},\label{mkp-non-Ltm}\\
&A_{2,t_m}-B_{m,y}+[A_2,B_m]=0,\label{mkp-non-zcc}
\end{align}
\end{subequations}
among which, \eqref{mkp-non-Ly} is the same as \eqref{mkp-iso-Ly}
which means $v_j$ are  expressed in terms of $v_0\equiv v$ as same as the isospectral case;
by \eqref{mkp-non-Ltm} one can uniquely determine $B_m$ with the boundary condition \eqref{Bm-b},
and we have
\begin{align*}
&B_1=2y\partial_x,\\
&B_2=2yA_2+xA_1,\\
&B_3=2yA_3+xA_2,
\end{align*}
etc, where  $A_j$ are given in \eqref{mkp-am};
Eq.\eqref{mkp-non-zcc} provides the zero curvature representation of the scalar nonisospectral mKP hierarchy
\begin{equation}
u_{t_m}=\sigma_m=\frac{1}{2}(B_{m,y}+[B_m,A_2])\partial^{-1}_x,
\end{equation}
where the first few equations are
\begin{subequations}
\begin{align}
&v_{t_1}=\sigma_1=2yK_1+1,\\
&v_{t_2}=\sigma_2=2yK_2+xK_1+v,\\
&v_{t_3}=\sigma_3=2yK_3+xK_2-\frac{1}{2}v^2+\frac{3}{2}\partial^{-1}_x v_y-\frac{1}{2}v_x,
\end{align}
\end{subequations}
with $K_j$ given in \eqref{3.5}.
Note that asymptotically we have
\begin{equation}
K_m|_{v=0}=0,
~~~
\sigma_m|_{v=0}=
\begin{cases}
1, & m=1,\\
0,& m \ge 2.
\end{cases}
\end{equation}

The flows $\{K_m, \sigma_s\}$ compose a Lie algebra with respect to the  product
$\llbracket \cdot, \cdot \rrbracket$ given in \eqref{lie} \cite{DCZ-2017}.

\begin{theorem} \label{Th-1}
The mKP flows $\{ K_m, \sigma_s \}_{m\geq 1, n \geq 1}$ compose a  Lie algebra with the following structure
\begin{subequations}  \label{flow-alg-e}
\begin{eqnarray}
    &&  \llbr K_m, K_s \rrbr  = 0,           \label{flow-alg-e kk}       \\
    &&  \llbr K_m, \sigma_s \rrbr  =    m K_{m+s-2},         \label{flow-alg-e ksigma}   \\
    &&  \llbr \sigma_m, \sigma_s \rrbr   =    (m-s) \sigma_{m+s-2},      \label{flow-alg-e ssigma}
\end{eqnarray}
\end{subequations}
in which we specially define $K_0=\sigma_0=0$.
Note that $\sigma_3$ plays a role of master symmetry.
\end{theorem}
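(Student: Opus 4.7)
The plan is to verify the three bracket identities by lifting each pair of flows to the Lax operator $L$, where the algebra of pseudo-differential operators admits the clean splitting $L^k=(L^k)_{\geq 1}+(L^k)_{\leq 0}$ and each flow is encoded by an operator identity for $A_m=(L^m)_{\geq 1}$ or for $B_m$. A guiding observation is that the spectral evolution of a commutator already forecasts its structure constants: $\llbr K_m,\sigma_s\rrbr$ has $\lambda$-evolution $0$ and so must be an isospectral flow, while
$$(\lambda^{m-1})_{\tau_s}-(\lambda^{s-1})_{\tau_m}=(m-1)\lambda^{m+s-3}-(s-1)\lambda^{m+s-3}=(m-s)\lambda^{m+s-3}$$
is the spectral signature of $(m-s)\sigma_{m+s-2}$ and immediately explains both the factor $(m-s)$ and the index $m+s-2$ in \eqref{flow-alg-e ssigma}.

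First I would establish the commutativity $\llbr K_m,K_s\rrbr=0$. Treating $t_m$ and $t_s$ as independent parameters and using $L_{t_m}=[A_m,L]$ together with the Jacobi identity,
$$L_{t_m t_s}-L_{t_s t_m}=\bigl[A_m'[K_s]-A_s'[K_m]+[A_m,A_s],\,L\bigr].$$
Differentiating $A_m=(L^m)_{\geq 1}$ along $t_s$ gives $A_m'[K_s]=([A_s,L^m])_{\geq 1}$; then, splitting $L^m$ and $L^s$ into their $\geq 1$ and $\leq 0$ parts and invoking $[L^m,L^s]=0$, the operator inside the bracket vanishes. Transferring this conclusion through $K_m=\tfrac12(A_{m,y}+[A_m,A_2])\partial_x^{-1}$ yields $\llbr K_m,K_s\rrbr=0$.

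For the mixed bracket I would compute $L_{t_m\tau_s}-L_{\tau_s t_m}$ starting from $L_{\tau_s}=[B_s,L]+L^{s-1}$ and using $(L^{s-1})_{t_m}=[A_m,L^{s-1}]$. After Jacobi, the difference takes the form $\bigl[A_m'[\sigma_s]-B_s'[K_m]+[A_m,B_s],\,L\bigr]$, and the task is to identify the bracketed operator with $mA_{m+s-2}$ modulo elements commuting with $L$. The coefficient $m$ is expected to arise from the Leibniz expansion of $A_m=(L^m)_{\geq 1}$ acting against $L^{s-1}$, and matching asymptotically at $\boldsymbol{v}=0$ (using $K_m|_{v=0}=0$ and the boundary condition \eqref{Bm-b}) pins down the index and constant uniquely, giving $\llbr K_m,\sigma_s\rrbr=mK_{m+s-2}$. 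The same scheme, applied now with both flows nonisospectral, yields $\llbr\sigma_m,\sigma_s\rrbr=(m-s)\sigma_{m+s-2}$ with the structure constant supplied by the spectral computation above.

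The hard part will be the operator bookkeeping after the lift to $L$: one must check that the residual operator differs from the target $mA_{m+s-2}$ (respectively $(m-s)B_{m+s-2}$) only by elements that annihilate $L$ under $\mathrm{ad}$, so that the induced flow on $v$ is unambiguous. Uniqueness arguments of the type used in Remark \ref{Rem-3-1}, combined with the boundary conditions at $\boldsymbol{v}=0$ that distinguish isospectral from nonisospectral flows, should close this gap and explain in particular why $K_0=\sigma_0=0$ emerges as the natural convention.
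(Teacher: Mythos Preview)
The paper does not prove Theorem~\ref{Th-1} here; it is quoted as a review result from \cite{DCZ-2017}. The method actually exhibited in the paper is the proof of the discrete analogue, Theorem~\ref{Th-3}, and that differs from your route. You lift to $L$ and compute commutators of the Lax flows $L_{t_m}=[A_m,L]$, $L_{\tau_s}=[B_s,L]+L^{s-1}$; the paper instead differentiates the zero-curvature representation $K_m=\tfrac12(A_{m,y}+[A_m,A_2])\partial_x^{-1}$ along the other flow, obtaining $\llbr K_m,\sigma_s\rrbr$ again in zero-curvature form built from the operator $\langle A_m,B_s\rangle=A_m'[\sigma_s]-B_s'[K_m]+[A_m,B_s]$, then evaluates $\langle A_m,B_s\rangle|_{\boldsymbol v=0}$ and invokes a uniqueness lemma for the zero-curvature equation (the continuous counterpart of Lemma~\ref{Lem-1}) to pin down the structure constants. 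Both routes arrive at the same operator $\langle A_m,B_s\rangle$ and both close via boundary data; yours has the nice feature that $\llbr K_m,K_s\rrbr=0$ falls out of $[L^m,L^s]=0$ almost algebraically, while the paper's stays entirely with differential operators and a single uniqueness lemma, never needing the full pseudo-differential $L$.

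One soft spot in your sketch: the step ``transferring this conclusion through $K_m=\tfrac12(A_{m,y}+[A_m,A_2])\partial_x^{-1}$'' is asserted rather than argued. From the Lax-level identity $L'\bigl[\llbr K_m,\sigma_s\rrbr\bigr]=[\langle A_m,B_s\rangle,L]$ you still need to descend to the scalar flow on $v$; this requires either the zero-curvature reformulation the paper uses, or injectivity of $L'$ together with the statement that a differential operator commuting with $L$ and vanishing at $\boldsymbol v=0$ must be zero. That last statement is Lemma~\ref{Lem-5-2}, which is the tool you actually need here rather than Remark~\ref{Rem-3-1}.
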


The above structure immediately yields symmetries of the mKP hierarchy \cite{DCZ-2017}.
\begin{theorem}\label{Th-2}
Each equation
\begin{equation}
v_{t_m}=K_m
\end{equation}
in the isospectral mKP hierarchy \eqref{mkp-h} has two sets of symmetries:
\begin{equation}  \label{corolary2  symmetries}
  \{ K_s \}_{s \geq 1}, \qquad \{\tau_k^m= m\, t_m K_{m+k-2}+\sigma_k \}_{k \geq 1},
\end{equation}
and they compose a Lie algebra with structure
\begin{subequations}
\begin{align}
    &  \llbr K_s, K_k \rrbr  =      0,                \\
    &  \llbr K_s, \tau_k^m \rrbr  =  s\, K_{s+k-2},                        \\
    &  \llbr \tau_s^m, \tau_k^m \rrbr   =  (s-k) \tau^m_{s+k-2},
\end{align}
\label{flow alge-1}
\end{subequations}
where we define $K_0=\tau^m_0=0$.
\end{theorem}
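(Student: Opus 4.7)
The plan is to reduce everything to Theorem~\ref{Th-1} by bilinear expansion, keeping in mind that the Gateaux bracket $\llbr\cdot,\cdot\rrbr$ treats the explicit $t_m$-factor in $\tau_k^m$ as an inert scalar.

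First I would verify the three relations in \eqref{flow alge-1}. The relation $\llbr K_s, K_k \rrbr = 0$ is precisely \eqref{flow-alg-e kk}. For $\llbr K_s, \tau_k^m \rrbr$, linearity in the second slot gives
\[
\llbr K_s, \tau_k^m \rrbr = m\,t_m \llbr K_s, K_{m+k-2} \rrbr + \llbr K_s, \sigma_k \rrbr = s K_{s+k-2}
\]
by \eqref{flow-alg-e kk} and \eqref{flow-alg-e ksigma}. For $\llbr \tau_s^m, \tau_k^m \rrbr$, bilinear expansion produces four pieces: $\llbr K_{m+s-2}, K_{m+k-2} \rrbr$ vanishes by \eqref{flow-alg-e kk}; the two cross terms combine through \eqref{flow-alg-e ksigma} into
\[
m t_m\bigl[(m+s-2)-(m+k-2)\bigr] K_{m+s+k-4} = m t_m (s-k) K_{m+s+k-4};
\]
and $\llbr \sigma_s, \sigma_k \rrbr = (s-k)\sigma_{s+k-2}$ by \eqref{flow-alg-e ssigma}. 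Factoring out $(s-k)$ and using $m+s+k-4 = m+(s+k-2)-2$ collects everything into $(s-k)\tau_{s+k-2}^m$.

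With the algebra in hand, the symmetry property follows at once. For any $\omega$ with possible explicit $t_m$-dependence, the symmetry condition $\omega_{t_m}=K_m'[\omega]$ reduces, upon substituting $v_{t_m}=K_m$, to $\partial_{t_m}|_{\mathrm{expl}}\omega = \llbr K_m, \omega\rrbr$. For $\omega = K_s$ both sides vanish (the left trivially, the right by \eqref{flow-alg-e kk}). For $\omega = \tau_k^m$, the left-hand side equals $m K_{m+k-2}$ by inspection, and the right-hand side equals $m K_{m+k-2}$ upon setting $s=m$ in the bracket formula just established.

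The main obstacle is merely bookkeeping: the bracket $\llbr\cdot,\cdot\rrbr$ is a Gateaux-derivative bracket acting only on the $v$-dependence, and it treats $t_m$ (as well as the explicit $x,y$ entering $\sigma_k$) as inert parameters. Once this is granted, the entire argument collapses to a linear assembly of the three structure relations in Theorem~\ref{Th-1}, with no new analytic input required.
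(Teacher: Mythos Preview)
Your proposal is correct and matches the paper's approach: the paper states that Theorem~\ref{Th-2} ``immediately'' follows from the flow algebra of Theorem~\ref{Th-1} (citing \cite{DCZ-2017}) without writing out any details, and your argument is precisely the bilinear expansion that makes this immediacy explicit. Your bookkeeping for the cross terms in $\llbr \tau_s^m,\tau_k^m\rrbr$ and the reduction of the symmetry condition to $\partial_{t_m}|_{\mathrm{expl}}\omega=\llbr K_m,\omega\rrbr$ are both correct, so nothing further is needed.
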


\subsection{Squared eigenfunction symmetry constraint}\label{sec-3-2}

In addition to the Lax triad \eqref{mkp-lax-iso}, we consider the following adjoint system
\begin{subequations}\label{mkp-lax-iso*}
\begin{align}
&L^*\phi^*=\lambda \phi^*, ~~~\lambda_{t_m}=0,\\
&\phi_y^*=-A_2^* \phi^*,\\
&\phi_{t_m}^*=-A_m^* \phi^*, ~~ (m=1,2,\cdots),
\end{align}
\end{subequations}
where $\phi^*$ stands for the eigenfunction,
$L^*$ and $A_m^*$ are the adjoint operators of $L$ and  $A_m$ defined through \eqref{inn}.
One can prove that the above system yields the iosospectral mKP hierarchy $v_{t_m}=K_m$ as well.
In addition, it can be proved that
\begin{equation}
\omega=(\phi\phi^*)_x
\end{equation}
is a (squared eigenfunction) symmetry of the iosospectral mKP hierarchy \eqref{mkp-h},
provided $\phi$ and $\phi^*$ are solutions to the
triplets \eqref{mkp-lax-iso} and \eqref{mkp-lax-iso*}, respectively.

Since $v_x$ is also a symmetry of the iosospectral mKP hierarchy, by imposing
$v_x=(\phi\phi^*)_x$ we have a squared eigenfunction symmetry constraint \cite{Chen-mKP-2002}
\begin{equation}
v=\phi\phi^*=qr,~~ (\phi=q,~ \phi^*=r).
\end{equation}
With such a constraint,
the pseudo-differential operator $L$ can be consistently written as
\begin{equation}
L=\partial_x + q \partial_x^{-1} r \partial_x,
\end{equation}
and consequently the spectral problem $L\varphi = \lambda \varphi$ gives rise to
\begin{equation}\label{KN-sp1}
\left(\begin{array}{c}
\varphi_1\\
\varphi_2
\end{array}
\right)_x
=\left(\begin{array}{cc}
-\eta^2 & \eta q\\
\eta r & -qr
\end{array}
\right)
\left(\begin{array}{c}
\varphi_1\\
\varphi_2
\end{array}
\right),
\end{equation}
where we have taken $\varphi_1=\varphi$ and $\lambda=-\eta^2$.
Introducing a gauge transformation
\begin{equation*}
\varphi_i=e^{-\frac{1}{2}(\eta^2 x+ \partial^{-1}_x qr)} \psi_i,~~ (i=1,2),
\end{equation*}
the spectral problem \eqref{KN-sp1} yields a SL(2) version
\begin{equation}\label{KN-sp2}
\left(\begin{array}{c}
\psi_1\\
\psi_2
\end{array}
\right)_x
=\left(\begin{array}{cc}
-\frac{1}{2}(\eta^2-qr) & \eta q\\
\eta r & \frac{1}{2}(\eta^2-qr)
\end{array}
\right)
\left(\begin{array}{c}
\psi_1\\
\psi_2
\end{array}
\right),
\end{equation}
which is the KN spectral problem to generate the DNLS hierarchy of the Chen-Lee-Liu type \cite{WS-JPSJ-1983}.
In the following we call it the CLL spectral problem for short.
In addition, the coupled system $\phi_{t_m}= A_m  \phi,~  \phi_{t_m}^*=-A_m^* \phi^*$
gives rise to  the DNLS hierarchy of the Chen-Lee-Liu type
(i.e. the CLL hierarchy)
\begin{equation}\label{CLL-hie}
\left(\begin{array}{c}
q\\
r
\end{array}
\right)_{t_{m+1}}
=T \left(\begin{array}{c}
q\\
r
\end{array}
\right)_{t_{m}},~~~
\left(\begin{array}{c}
q\\
r
\end{array}
\right)_{t_{1}}=
\left(\begin{array}{c}
q\\
r
\end{array}
\right)_{x},~~ (m=1,2,\cdots),
\end{equation}
where the recursion operator $T$ is
\begin{equation}\label{T}
T=\left(\begin{array}{cc}
\partial_x+qr+q_x\partial_x^{-1}r-q\partial_x^{-1}r_x & q_x\partial_x^{-1}q+q\partial_x^{-1}q_x\\
r_x\partial_x^{-1}r+r\partial_x^{-1}r_x & -\partial_x+qr+r_x\partial_x^{-1}q-r\partial_x^{-1}q_x
\end{array}
\right).
\end{equation}
For more details of the above squared eigenfunction symmetry constraint,
one can refer to \cite{Chen-mKP-2002,Chen-2006}.

\section{The  D$\Delta$mKP: symmetries and constraint}\label{sec-4}

Since in the recent paper \cite{CZZ-2021} the squared eigenfunction symmetry constraint leads the
D$\Delta$mKP system to the spectral problem and the positive hierarchy of the relativistic Toda,
not those of the DNLS as we expected,
our particular interest in the current paper is to find a new formulation of the squared eigenfunction symmetry constraint
so that the constrained the D$\Delta$mKP system matches the
continuous results described in Sec.\ref{sec-3}.
In addition, the match of the algebraic structures in the continuum limit
is also important because this means the discretization keeps well the integrable structures.
In the following we will first derive nonisospectral  D$\Delta$mKP  flows,
from which we will obtain symmetries of the D$\Delta$mKP hierarchy and their Lie algebraic structure.
After that, we will look at the squared eigenfunction symmetry constraint.

\subsection{Symmetries and algebraic structures}\label{sec-4-1}

In the following, to consider continuum limit, we introduce the spacing parameter $h$ in the $n$-direction.
Thus, instead of the pseudo-difference operator $\bar{L}$ given in \eqref{dmKP-L1}, we use the following form,
\begin{equation}\label{dmkp-L}
\bar{L}=h^{-1}\bar{v}\Delta+\bar{v}_0+h\bar{v}_1\Delta^{-1}+\cdots+h^j \bar{v}_j \Delta^{-j}+\cdots,
\end{equation}
where
\begin{equation}
\bar{v}=1+h \tilde v
\end{equation}
and we assume $\tilde v$ and $\bar{v}_i$ belong to $\bar S[\tilde{v}]$.

Consider the isospectral Lax triad (cf.\cite{CZZ-2021})
\begin{subequations}\label{dmkp-iso-lax}
\begin{align}
&\bar{L} \bar \phi=\lambda \bar \phi,~~~\lambda_{t_m}=0, \label{dmkp-sp}\\
&\bar \phi_{\bar{x}}=\bar{A}_1\bar \phi,~~~\bar{A}_1=h^{-1}\bar{v}\Delta, \label{dmkp-iso-A1}\\
&\bar \phi_{\bar t_m}=\bar{A}_m\bar \phi,~~~m=1, 2, \cdots, \label{dmkp-flow-lax}
\end{align}
\end{subequations}
where $\bar{A}_m=(\bar{L}^m)_{\geq 1}$ contains the pure difference part (in terms of $\Delta$)
of $\bar{L}^m$, and the first three are
\begin{subequations}\label{Aj}
\begin{align}
&\bar{A}_1=h^{-1}\bar{v}\Delta,\label{A1}\\
&\bar{A}_2=h^{-2}\bar{v}(E\bar{v})\Delta^2+h^{-2}\bar{v}(\Delta \bar{v})\Delta+h^{-1}\bar{v}(E\bar{v}_0)\Delta+h^{-1}\bar{v}\bar{v}_0\Delta,\\
&\bar{A}_3=h^{-3} \bar{a}_{0} \Delta^3+h^{-2} \bar{a}_1 
\Delta^2+h^{-1} \bar{a}_2 \Delta,
\end{align}
\end{subequations}
in which
\begin{align*}
 \bar{a}_0  &=\bar{v}(E\bar{v})(E^2\bar{v}),\\
 \bar{a}_1 &=2h^{-1}\bar{v}(E\bar{v})(E\Delta\bar{v})+\bar{v}(E\bar{v})(E^2\bar{v}_0)
 +h^{-1}\bar{v}(\Delta \bar{v})(E\bar{v})+\bar{v}(E\bar{v}_0)(E\bar{v})+\bar{v}\bar{v}_0(E\bar{v}),\\
 \bar{a}_2 & = h^{-2}\bar{v}(E\bar{v})(\Delta^2\bar{v})
+2h^{-1}\bar{v}(E\bar{v})(E\Delta\bar{v}_0)+\bar{v}(E\bar{v})(E^2\bar{v}_1)
+h^{-2}\bar{v}(\Delta\bar{v})^2+2h^{-1}\bar{v}(\Delta\bar{v})(E\bar{v}_0)\\
&~~~+\bar{v}(E\bar{v}_0)^2+h^{-1}\bar{v}\bar{v}_0(\Delta\bar{v})
+\bar{v}\bar{v}_0(E\bar{v}_0)+h^{-1}\bar{v}^2(\Delta\bar{v}_0)
+\bar{v}^2(E\bar{v}_1)+\bar{v}\bar{v}_1(E^{-1}\bar{v})+ \bar v_0^2 \bar v,
\end{align*}
where $\bar{v}=1+h \tilde v$.
Obviously,  $\bar{A}_m$ obeys the asymptotic condition
\begin{equation}\label{4.5}
\bar{A}_m|_{\tilde{v}=0}=h^{-m}\Delta^m.
\end{equation}
The compatibility  of \eqref{dmkp-iso-lax} gives rise to
\begin{subequations}\label{dmkp-iso-com}
\begin{align}
&\bar{L}_{\bar{x}}=[\bar{A}_1,\bar{L}],\label{dmkp-iso-Lx}\\
&\bar{L}_{\bar{t}_m}=[\bar{A}_m,\bar{L}],\label{dmkp-iso-Ltm}\\
&\bar{A}_{1,\bar{t}_m}-\bar{A}_{m,\bar{x}}+[\bar{A}_1,\bar{A}_m]=0,\label{dmkp-iso-zcc}
\end{align}
\end{subequations}
Among them, Eq.\eqref{dmkp-iso-Lx} yields
\begin{subequations}\label{dmkp-v}
\begin{align}
&\bar{v}\Delta \bar{v}_0=\bar{v}_x,\label{dmkp-v0}\\
& h\bar{v}(E\bar{v}_1)-h(E^{-1}\bar{v})\bar{v}_1=h\bar{v}_{0,x}-\bar{v}(\Delta \bar{v}_0), \label{dmkp-v1}\\
& h\bar{v}(E\bar{v}_{s+1})-h(E^{-1-s}\bar{v})\bar{v}_{s+1}
=h\bar{v}_{s,x}-\bar{v}(\Delta \bar{v}_s)+\sum_{i=1}^{s}\mathrm{C}_s^{i-1}(-1)^{s-i+1}h^{i-s}\bar{v}_iE^{-s-1}\Delta^{s-i+1}\bar{v}
\label{dmkp-vs}
\end{align}
\end{subequations}
for $s\geq 1$,
from which one can `integrate' $\bar{v}_s$ with boundary condition $\bar{v}_s|_{\tilde{v}=0}=0$,
and then one has
\begin{align}\label{4.8}
\bar{v}_0=\Delta^{-1}(\ln\bar{v})_{\bar{x}},~~~
\bar{v}_1=\frac{\Delta^{-2}(\ln\bar{v})_{\bar{x}\bar{x}}
-h^{-1}\Delta^{-1}\bar{v}_{\bar{x}}}{(E^{-1}\bar{v})},~~~ \cdots.
\end{align}
Eq.\eqref{dmkp-iso-zcc} provides the zero curvature representation of the isopectral D$\Delta$mKP hierarchy,
\begin{equation}\label{4.9}
\tilde{v}_{\bar{t}_m}=\bar{K}_m(\tilde{v})
= (\bar{A}_{m,\bar{x}}-[\bar{A}_1, \bar{A}_m])\Delta^{-1},~~~m=1,2,\cdots,
\end{equation}
where the first three of them are
\begin{subequations}\label{dmkp-k}
\begin{align}
\tilde{v}_{\bar{t}_1}=\bar{K}_1(\tilde{v})&=\tilde{v}_{\bar{x}},\\
\tilde{v}_{\bar{t}_2}=\bar{K}_2(\tilde{v})
&=h^{-1}[\bar{v}(1+2\Delta^{-1})(\ln\bar{v})_{\bar{x}\bar{x}}
+\bar{v}_{\bar{x}}(1+2\Delta^{-1})(\ln\bar{v})_{\bar{x}}-2h^{-1}\bar{v}\bar{v}_{\bar{x}}],\\
\tilde{v}_{\bar{t}_3}=\bar{K}_3(\tilde{v})
&=h^{-1}[\bar{v}(3\Delta^{-2}+3\Delta^{-1}+1)(\ln\bar{v})_{\bar{x}\bar{x}\bar{x}} 
+\bar{v}_{\bar{x}}(3\Delta^{-2}+3\Delta^{-1}+1)(\ln\bar{v})_{\bar{x}\bar{x}}\nonumber\\
&~~~+\bar{v}_{\bar{x}}H_1^2
+2\bar{v} H_1 H_2  -(\bar{v} (\Delta^{-1}(\ln\bar{v})_{\bar{x}}))_{\bar x}H_1
 -\bar{v}(\Delta^{-1}(\ln\bar{v})_{\bar{x}}) H_2 \nonumber\\
&~~~ +3h^{-2}\bar{v}^2\bar{v}_{\bar{x}}-3h^{-1}(\bar{v}(1+\Delta^{-1})\bar{v}_{\bar{x}})_{\bar x} 
-3h^{-1}(\bar{v}^2\Delta^{-1}(\ln\bar{v})_{\bar{x}})_{\bar{x}}
],
\end{align}
\end{subequations}
where
\[H_1=(1+2\Delta^{-1})(\ln\bar{v})_{\bar{x}}, ~~
H_2=(1+2\Delta^{-1})(\ln\bar{v})_{\bar{x} \bar {x}}.
\]

\begin{remark}\label{Rem-4-1}
If we assume $\bar A_m$ has a form
\[\bar A_m=  \bar a_0  \Delta^m+  \bar a_1 \Delta^{m-1}+ \cdots + \bar a_{m-1} \Delta,~~
\bar A_m |_{\tilde v=0}=\Delta^m,
\]
then $\{\bar a_j\}$ can be uniquely determined from \eqref{dmkp-iso-Ltm} and it gives rise to
$\bar A_m=(\bar L^m)_{\geq 1}$.
\end{remark}

To find a master symmetry, we consider the nonisospectral Lax triad
\begin{subequations}\label{dmkp-non-lax}
\begin{align}
&\bar{L} \bar \phi=\lambda \bar \phi,~~~\lambda_{t_m}=h\lambda^m+\lambda^{m-1},\label{4.11}\\
&\bar \phi_{\bar{x}}=\bar{A}_1\bar \phi,\\
&\bar \phi_{t_m}=\bar{B}_m\bar \phi,~~~m=2, 3, \cdots,
\end{align}
\end{subequations}
where $\bar B_m$ is assumed to be a difference operator with the form
\begin{equation*}
\bar{B}_m=\sum_{j=0}^{m-1}h^{-(m-j)}\bar{b}_j\Delta^{m-j},
\end{equation*}
with the asymoptotic condition
\begin{align}\label{dmkp-non-bc}
{\bar{B}_m}|_{\tilde{v}=0}=
h^{-(m-1)}\bar{x}\Delta^m+h^{-(m-1)}(\bar{x}+nh)\Delta^{m-1}, ~~~ m \ge 2.
\end{align}
The compatibility of \eqref{dmkp-non-lax} gives rise to
\begin{subequations}\label{dmkp-non-com}
\begin{align}
&\bar{L}_{\bar{x}}=[\bar{A}_1,\bar{L}],\label{dmkp-non-Lx}\\
&\bar{L}_{\bar{t}_m}=[\bar{B}_m,\bar{L}]+h\bar{L}^m+\bar{L}^{m-1},\label{dmkp-non-Ltm}\\
&\bar{A}_{1,\bar{t}_m}-\bar{B}_{m,\bar{x}}+[\bar{A}_1,\bar{B}_m]=0.\label{dmkp-non-zcc}
\end{align}
\end{subequations}
Again, \eqref{dmkp-non-Lx} yields \eqref{dmkp-v} which expresses $\bar v_j$ in terms of $\tilde v$.
$\{\bar B_m\}$ are uniquely defined from \eqref{dmkp-non-Ltm} but
no $\bar{B}_1$ can be obtained in that approach, therefore there is no the
nonisospectral D$\Delta$mKP flow $\bar\sigma_1$ defined from \eqref{dmkp-non-com}.
The first two $ \bar B_m $ are
\begin{align*}
&\bar{B}_2=h\bar{x}\bar{A}_2+(\bar{x}+nh)\bar{A}_1,\\
&\bar{B}_3=h\bar{x}\bar{A}_3+(\bar{x}+nh)\bar{A}_2
+\bar{v}(\Delta^{-1}\bar{v}_0)\Delta-h^{-1}\bar{v}(\Delta^{-1}\bar{v})\Delta
+h^{-1}n\bar{v}\Delta.
\end{align*}
where $\bar A_j$ are given in \eqref{Aj}.
Then, \eqref{dmkp-non-zcc} defines the  nonisopectral D$\Delta$mKP  hierarchy
\begin{equation}\label{4.16}
\tilde{v}_{\bar{t}_m}=\bar{\sigma}_m(\tilde{v})= (\bar{B}_{m,\bar{x}}+[\bar{B}_m,\bar{A}_1])\Delta^{-1},
~~  m=2,3,\cdots,
\end{equation}
in which  the first two equations are
\begin{align*}
\tilde{v}_{\bar{t}_2}&=\bar{\sigma}_2(\tilde{v})=h\bar{x}\bar{K}_2
+(\bar{x}+nh)\bar{K}_1+\bar{v}(1+2\Delta^{-1})(\ln\bar{v})_{\bar{x}}+ h^{-1} \bar{v}- h^{-1} {\bar{v}}^2.\\
\tilde{v}_{\bar{t}_3}&=\bar{\sigma}_3(\tilde{v})  \\
& =h\bar{x}\bar{K}_3
+(\bar{x}+nh)\bar{K}_2+h^{-2}\bar{v}(E\bar{v})(\Delta^2\bar{v})
+2h^{-1}\bar{v}(E\bar{v})(E\Delta\bar{v}_0)+\bar{v}(E\bar{v})(E^2\bar{v}_1)  \\
&~~~+h^{-2}\bar{v}(\Delta\bar{v})^2+2h^{-1}\bar{v}(\Delta\bar{v})(E\bar{v}_0)
+\bar{v}(E\bar{v}_0)^2+h^{-1}\bar{v}\bar{v}_0(\Delta\bar{v})
+\bar{v}\bar{v}_0(E\bar{v}_0) \\
&~~~+h^{-1}\bar{v}^2(\Delta\bar{v}_0)+\bar{v}^2(E\bar{v}_1)
+\bar{v}\bar{v}_1(E^{-1}\bar{v})+ \bar v^2_0 \bar v +h^{-2}\bar{v}(\Delta \bar{v})
+h^{-1}\bar{v}(E\bar{v}_0)  \\
&~~~+h^{-1}\bar{v}\bar{v}_0 
+(\bar{v}(\Delta^{-1}\bar{v}_{0}))_{\bar{x}} 
-h^{-1}(\bar{v}(\Delta^{-1}\bar{v}))_{\bar{x}} 
 +h^{-1}n\bar{v}_{\bar{x}}-h^{-1}\bar{v}(E\bar{v})\bar{v}_0,
\end{align*}
where $\bar{v}_0$ and $\bar{v}_1$ are given in \eqref{4.8} and
$\bar K_j$ are given in \eqref{dmkp-k}.
In addition, one can find that
\begin{equation}
\bar{K}_m|_{\tilde{v}=0}=0,~~~ \bar{\sigma}_s|_{\tilde{v}=0}=0,~~~(m=1,2,\cdots, ~~s=2,3,\cdots).
\end{equation}

Eqs. \eqref{4.9} and \eqref{4.16} are the zero curvature representations of the
D$\Delta$mKP flows $\{\bar{K}_m,\bar{\sigma}_s\}_{m \ge 1, s \ge 2}$,
which can be used to determine algebraic structures of these flows
(e.g. \cite{CXZ-2003,CZ-JMP-1996,FHTZ-2013,MF-JMP-1999}).
To proceed, one should have a zero curvature equation that allows unique solutions.

\begin{lemma}\label{Lem-1}
Suppose  $\bar{X}=\bar{X}(\tilde v) \in \bar{S}[\tilde v]$ and $\bar N$ is a difference operator living on
$\bar{S}[\tilde v]$, with a form
\begin{equation*}
\bar{N}=\bar{a}_0\Delta^m+\bar{a}_1\Delta^{m-1}+\cdots+\bar{a}_{m-1}\Delta,
\end{equation*}
where $\bar{a}_j=\bar{a}_j(\tilde v) \in \bar{S}[\tilde v]$ and it follows that $\bar{N}|_{\tilde{v}=0}=0$.
Then, the following equation
\begin{equation}\label{4.20}
\bar{X}-\bar{N}_{\bar{x}}\Delta^{-1}+[\bar{A}_1,\bar{N}]\Delta^{-1}=0
\end{equation}
has only  zero solutions $\bar{X}=0$ and  $\bar{N}=0$,
where $\bar{A}_1$ is given as \eqref{A1}.
\end{lemma}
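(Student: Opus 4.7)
The plan is to multiply \eqref{4.20} on the right by $\Delta$, turning it into the pure difference-operator identity
\[
\bar N_{\bar x}-[\bar A_1,\bar N]=\bar X\,\Delta.
\]
Since the right-hand side has only a $\Delta^{1}$ term, matching coefficients of $\Delta^{k}$ on the left for $k\ge 2$ will give a cascade of vanishing conditions on $\bar a_0,\bar a_1,\ldots,\bar a_{m-1}$; the coefficient of $\Delta^{1}$ will then determine $\bar X$.

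The key step peels off $\bar a_0$. Using $\bar A_1=h^{-1}\bar v\Delta$ together with the extended Leibniz rule \eqref{2.7}, the $\Delta^{m+1}$-coefficient of $[\bar A_1,\bar a_0\Delta^{m}]$ works out to $h^{-1}\bigl(\bar v\,E\bar a_0-\bar a_0\,E^{m}\bar v\bigr)$, while $[\bar A_1,\bar a_j\Delta^{m-j}]$ has top degree only $\Delta^{m-j+1}\le\Delta^{m}$ for $j\ge 1$, so no other term of $\bar N_{\bar x}-[\bar A_1,\bar N]$ reaches degree $m+1$. Equating this coefficient to zero yields the first-order linear recursion $\bar v_n\,\bar a_{0,n+1}=\bar v_{n+m}\,\bar a_{0,n}$, which (using $\bar v_n\neq 0$ since $\bar v=1+h\tilde v$ is close to $1$ for large $|n|$) telescopes to
\[
\bar a_{0,n}=\bar a_{0,n_0}\,\frac{\bar v_n\bar v_{n+1}\cdots\bar v_{n+m-1}}{\bar v_{n_0}\bar v_{n_0+1}\cdots\bar v_{n_0+m-1}}.
\]
Because $\tilde v$ decays rapidly in $n$, the ratio tends to a nonzero limit as $n\to\infty$, whereas $\bar a_{0,n}\to 0$ by the Schwartz assumption $\bar a_0\in\bar S[\tilde v]$; this forces $\bar a_{0,n_0}=0$ for every $n_0$, i.e.\ $\bar a_0\equiv 0$.

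With $\bar a_0=0$ the operator $\bar N$ reduces to $\bar a_1\Delta^{m-1}+\cdots+\bar a_{m-1}\Delta$, which is of the form treated by the lemma with $m$ replaced by $m-1$; the $\Delta^{m}$-coefficient equation is now the analogous shift recursion for $\bar a_1$, and the same telescoping/decay argument forces $\bar a_1\equiv 0$. Iterating through $j=2,\ldots,m-1$ yields $\bar N=0$, and substituting back into \eqref{4.20} then leaves $\bar X=0$. The main obstacle will be the bookkeeping in the peel-off step: at each $j$ one must verify that the coefficient of $\Delta^{m+1-j}$ in $\bar N_{\bar x}-[\bar A_1,\bar N]$ depends on $\bar a_j$ exactly through the clean shift expression $h^{-1}\bigl(\bar v\,E\bar a_j-\bar a_j\,E^{m-j}\bar v\bigr)$ (the contributions of $\bar a_0,\ldots,\bar a_{j-1}$ are already zero, while those of $\bar a_{j+1},\ldots,\bar a_{m-1}$ sit at strictly lower degree). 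Formula \eqref{2.7} makes this explicit but is calculation-heavy; once it is in hand, the telescoping-plus-decay conclusion is identical at every step.
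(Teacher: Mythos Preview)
Your proof is correct and follows essentially the same approach as the paper: both arguments substitute $\bar N$ into \eqref{4.20}, peel off the coefficients $\bar a_0,\bar a_1,\ldots,\bar a_{m-1}$ successively via the first-order shift recursions coming from the top-degree term of $[\bar A_1,\bar N]$, and conclude each $\bar a_j=0$ from the decay $\bar v\to 1$, $\bar a_j\to 0$ as $|n|\to\infty$. The paper only sketches this (``substituting $\bar N$ into \eqref{4.20} one can successfully find $\bar a_0=0,\ldots$''), while you have filled in the explicit recursion and telescoping; your bookkeeping of which $\bar a_k$ contribute at each $\Delta$-degree is also correct.
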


In fact, since $\bar v =1+h \tilde v \to 1$ as $|n|\to \infty$,
substituting $\bar{N}$ into \eqref{4.20} one can successfully find
$\bar{a}_0=0$, $\bar{a}_1=0$, $\cdots$, $\bar{a}_{m-1}=0$
and finally $\bar N=0$ and $\bar X=0$.

Then we have the following.
\begin{theorem}\label{Th-3}
The D$\Delta$mKP flows $\{\bar{K}_m,\bar{\sigma}_s\}_{m \ge 1, s \ge 2}$ compose a Lie algebra with the following structure
\begin{subequations}\label{K-S-alg}
\begin{align}
&\llbracket \bar{K}_m,\bar{K}_s \rrbracket=0,\\
&\llbracket \bar{K}_m,\bar{\sigma}_s \rrbracket=m (h \bar{K}_{m+s-1}+\bar{K}_{m+s-2}), \label{K-S-alg-2}\\
&\llbracket \bar{\sigma}_m,\bar{\sigma}_s \rrbracket=(m-s) (h\bar{\sigma}_{m+s-1}+\bar{\sigma}_{m+s-2}).
\end{align}
\end{subequations}
\end{theorem}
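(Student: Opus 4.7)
The strategy is to translate each Lie bracket $\llbracket \cdot,\cdot \rrbracket$ of flows into a difference-operator identity via the Lax framework, and then apply Lemma \ref{Lem-1} to identify the result with a specific flow from the family $\{\bar K_m, \bar\sigma_s\}$. The essential calculational tool will be a set of Gateaux-derivative formulas for powers of $\bar L$ along the two types of flows.

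First I would establish these formulas. From the Lax equations \eqref{dmkp-iso-Ltm} and \eqref{dmkp-non-Ltm} one has $\bar L'[\bar K_s] = [\bar A_s, \bar L]$ and $\bar L'[\bar\sigma_s] = [\bar B_s, \bar L] + h\bar L^s + \bar L^{s-1}$; a telescoping sum then gives
\[
(\bar L^m)'[\bar K_s] = [\bar A_s, \bar L^m], \qquad (\bar L^m)'[\bar\sigma_s] = [\bar B_s, \bar L^m] + m\bigl(h\bar L^{m+s-1} + \bar L^{m+s-2}\bigr).
\]
Projecting onto the pure-difference part yields
\[
\bar A_m'[\bar K_s] = \bigl([\bar A_s,\bar L^m]\bigr)_{\geq 1}, \qquad \bar A_m'[\bar\sigma_s] = \bigl([\bar B_s, \bar L^m]\bigr)_{\geq 1} + m\bigl(h\bar A_{m+s-1} + \bar A_{m+s-2}\bigr),
\]
together with analogous formulas for $\bar B_m'[\bar K_s]$ and $\bar B_m'[\bar\sigma_s]$ (the latter carrying an additional $s(h\bar B_{m+s-1} + \bar B_{m+s-2})$ piece).

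For $\llbracket \bar K_m, \bar K_s\rrbracket$, I would take the Gateaux derivative of the isospectral zero curvature equation \eqref{dmkp-iso-zcc} in the direction $\bar K_s$, subtract the symmetric expression with $m\leftrightarrow s$, and use the Jacobi identity together with $[\bar L^m, \bar L^s]=0$ to reduce the result to an identity of the form
\[
\bigl(\llbracket \bar K_m, \bar K_s\rrbracket\bigr)\Delta - \bar N_{\bar x} + [\bar A_1, \bar N] = 0,
\]
for some positive-order difference operator $\bar N$ with $\bar N|_{\tilde v=0}=0$; Lemma \ref{Lem-1} then forces the bracket to vanish. For $\llbracket \bar K_m, \bar\sigma_s\rrbracket$ the same procedure, applied with the nonisospectral zero curvature \eqref{dmkp-non-zcc}, injects the extra contribution $m(h\bar A_{m+s-1}+\bar A_{m+s-2})$, which is the operator counterpart of the flow $m(h\bar K_{m+s-1} + \bar K_{m+s-2})$, while the residual again satisfies the hypotheses of Lemma \ref{Lem-1}. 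For $\llbracket \bar\sigma_m, \bar\sigma_s\rrbracket$, both flows contribute such extra pieces and antisymmetry in $m,s$ produces the overall factor $(m-s)$.

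The main obstacle will be the careful bookkeeping of the projection $(\cdot)_{\geq 1}$. Splitting $\bar L^m = \bar A_m + (\bar L^m)_{<1}$, one must verify that the leftover from projecting the commutators $[\bar A_s, (\bar L^m)_{<1}]$, $[\bar B_s, (\bar L^m)_{<1}]$ and their symmetric counterparts combines with the $\bar A_1$-derivative terms into an operator $\bar N$ of exactly the form permitted by Lemma \ref{Lem-1}---positive order in $\Delta$, no $\Delta^0$ term, and vanishing at $\tilde v=0$. Matching this with the asymptotic conditions \eqref{4.5} and \eqref{dmkp-non-bc} of $\bar A_{m+s-1}, \bar A_{m+s-2}, \bar B_{m+s-1}, \bar B_{m+s-2}$ is the delicate step that closes the argument.
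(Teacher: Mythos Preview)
Your approach is essentially the same as the paper's: both derive, via G\^ateaux differentiation of the zero curvature representations \eqref{4.9} and \eqref{4.16} together with the Jacobi identity, an expression of the form
\[
\llbracket \bar K_m,\bar\sigma_s\rrbracket
=\bigl(\langle \bar A_m,\bar B_s\rangle_{\bar x}+[\langle \bar A_m,\bar B_s\rangle,\bar A_1]\bigr)\Delta^{-1},
\qquad
\langle \bar A_m,\bar B_s\rangle \doteq \bar A_m'[\bar\sigma_s]-\bar B_s'[\bar K_m]+[\bar A_m,\bar B_s],
\]
and then invoke Lemma~\ref{Lem-1} to identify the result with a known flow.

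The only genuine difference is in how the identification step is carried out. You extract the target flow $m(h\bar K_{m+s-1}+\bar K_{m+s-2})$ from your explicit formula $\bar A_m'[\bar\sigma_s]=([\bar B_s,\bar L^m])_{\ge 1}+m(h\bar A_{m+s-1}+\bar A_{m+s-2})$, and then argue that the remainder has the form demanded by Lemma~\ref{Lem-1}. The paper instead evaluates $\langle \bar A_m,\bar B_s\rangle$ directly at $\tilde v=0$: since $\bar K_m|_{\tilde v=0}=\bar\sigma_s|_{\tilde v=0}=0$, both G\^ateaux derivative terms vanish there, and one is left with the commutator $[\,h^{-m}\Delta^m,\,h^{-(s-1)}(\bar x\Delta^s+(\bar x+nh)\Delta^{s-1})\,]$, which computes to $mh^{-(m+s-2)}(\Delta^{m+s-1}+\Delta^{m+s-2})$ and matches the asymptotics of $m(h\bar A_{m+s-1}+\bar A_{m+s-2})$.

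One word of caution: your ``analogous formulas'' for $\bar B_m'[\bar K_s]$ and $\bar B_m'[\bar\sigma_s]$ are not as readily available as for $\bar A_m$, because $\bar B_m$ is \emph{not} a $(\cdot)_{\geq 1}$-projection of a power of $\bar L$ --- it is fixed only implicitly by \eqref{dmkp-non-Ltm} together with the boundary condition \eqref{dmkp-non-bc}. So the Lax-equation telescoping you describe does not produce a closed expression for these derivatives. In practice you will have to fall back on exactly the asymptotic matching you mention in your last paragraph; that is precisely what the paper does, and it bypasses the need for explicit formulas for $\bar B_m'$ altogether.
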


\begin{proof}
First, for the D$\Delta$mKP flows $\{\bar{K}_m,\bar{\sigma}_n\}_{m \ge 1,n \ge 2}$,
using their zero curvature representations, we can find that
\begin{subequations}
\begin{align}
&\llbracket \bar{K}_m,\bar{K}_s\rrbracket=(\langle \bar{A}_m,\bar{A}_s\rangle_y
+[\langle \bar{A}_m,\bar{A}_s\rangle,\bar{A}_2])\Delta^{-1},\\
&\llbracket \bar{K}_m,\bar{\sigma}_s\rrbracket= (\langle \bar{A}_m,\bar{B}_s\rangle_y
+[\langle \bar{A}_m,\bar{B}_s\rangle,\bar{A}_2])\Delta^{-1},\label{km,sn}\\
&\llbracket \bar{\sigma}_m,\bar{\sigma}_s\rrbracket= (\langle \bar{B}_m,\bar{B}_s\rangle_y
+[\langle \bar{B}_m,\bar{B}_s\rangle,\bar{A}_2])\Delta^{-1},
\end{align}
\end{subequations}
where
\begin{align*}
&\langle \bar{A}_m,\bar{A}_s\rangle=\bar{A}_m'[\bar{K}_s]-\bar{A}_s'[\bar{K}_m]+[\bar{A}_m,\bar{A}_s],\\
&\langle \bar{A}_m,\bar{B}_s\rangle=\bar{A}_m'[\bar{\sigma}_s]
- \bar{B}_s' [\bar{K}_m]+[\bar{A}_m,\bar{B}_s],\\
&\langle \bar{B}_m,\bar{B}_s\rangle=\bar{B}_m'[\bar{\sigma}_s]-\bar{B}_s'[\bar{\sigma}_m]
+[\bar{B}_m,\bar{B}_s].
\end{align*}
Let us take \eqref{km,sn} as an example to explain the above relations.
From the zero curvature representation \eqref{4.9} we have
\begin{subequations}
\begin{align*}
\bar{K}_m'(\tilde v)[\bar{\sigma}_s]&=((\bar{A}_{m,\bar{x}}
+[\bar{A}_m,\bar{A}_1])\Delta^{-1})'[\bar{\sigma}_s]\\
&=\bigl(\bar{A}_m'[\bar{\sigma}_s])_{\bar{x}}+[\bar{A}_m'[\bar{\sigma}_s],\bar{A}_1]
+[\bar{A}_m,\bar{A}_1'[\bar{\sigma}_s]]\bigr) \Delta^{-1}\\
&=\bigl( (\bar{A}_m'[\bar{\sigma}_s])_{\bar{x}}
+[\bar{A}_m'[\bar{\sigma}_s],\bar{A}_1]+[\bar{A}_m,h^{-1}\bar{\sigma}_s\Delta]]\bigr) \Delta^{-1}\\
&=\bigl( (\bar{A}_m'[\bar{\sigma}_s])_{\bar{x}}+[\bar{A}_m'[\bar{\sigma}_s],\bar{A}_1]
+[\bar{A}_m,\bar{B}_{s,\bar{x}}]+[\bar{A}_m,[\bar{B}_s,\bar{A}_1]]\bigr) \Delta^{-1}.
\end{align*}
\end{subequations}
Similarly, from \eqref{4.16} we have
\begin{equation*}
\bar{\sigma}_s'(\tilde v)[\bar{K}_m]=\bigl( (\bar{B}_s'[\bar{K}_m])_{\bar{x}}
+[\bar{B}_s'[\bar{K}_m],\bar{A}_1]
+[\bar{B}_s,\bar{A}_{m,\bar{x}}]+[\bar{B}_s,[\bar{A}_m,\bar{A}_1]]\bigr) \Delta^{-1}.
\end{equation*}
Subtracting $\bar{\sigma}_s'[\bar{K}_m]$ from $\bar{K}_m'[\bar{\sigma}_s]$
and making use of the Jacobi identity of $[\cdot,\cdot]$, we obtain \eqref{km,sn}.

Next, noticing that the following asymptotic conditions
\begin{align*}
&\langle \bar{A}_m,\bar{A}_s\rangle|_{\bar{v}=1}=0,\\
&\langle \bar{A}_m,\bar{B}_s\rangle|_{\bar{v}=1}=h^{-(m+s-2)}m(\Delta^{m+s-1}+\Delta^{m+s-2}), \\
&\langle \bar{B}_m,\bar{B}_s\rangle|_{\bar{v}=1}=h^{-(m+s-3)}(m-s)(\bar{x}\Delta^{m+s-1}
+(2\bar{x}+hn)\Delta^{m+s-2}+(\bar{x}+hn)\Delta^{m+s-3})
\end{align*}
hold, comparing them with \eqref{4.5} and \eqref{dmkp-non-bc},
and using Lemma \ref{Lem-1},
we obtain the algebraic relations \eqref{K-S-alg}.

\end{proof}

With these structures we have the following.

\begin{theorem}\label{Th-4-2}
Each equation
\begin{equation}
\tilde v_{t_k}=\bar K_k
\end{equation}
in the isospectral D$\Delta$mKP hierarchy \eqref{4.9} has two sets of symmetries:
\begin{equation}  \label{4.26}
  \{ \bar{K}_s \}_{s \geq 1}, \qquad
  \{ \bar{\tau}_m^k= k \  \bar t_k  (h \bar{K}_{m+k-1}+\bar{K}_{m+k-2})+\bar{\sigma}_m \}_{m \geq 2},
\end{equation}
and they compose a Lie algebra with structure
\begin{subequations}\label{4.27}
\begin{align}
    &  \llbr \bar{K}_s, \bar{K}_m \rrbr  =      0,                \\
    &  \llbr \bar{K}_s, \bar{\tau}_m^k \rrbr  =  s (h \bar{K}_{m+s-1}+\bar{K}_{m+s-2}),                        \\
    &  \llbr \bar{\tau}_s^k, \bar{\tau}_m^k \rrbr   =   (s-m) (h \bar{\tau}^k_{m+s-1}+\bar{\tau}^k_{m+s-2}).
\end{align}
\end{subequations}
\end{theorem}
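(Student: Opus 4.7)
The plan is to mimic the continuous argument behind Theorem \ref{Th-2}, using Theorem \ref{Th-3} as the workhorse. The key observation is that $\bar t_k$ is a scalar parameter independent of $\tilde v$, so $\bar t_k$ factors pass through Gâteaux derivatives and the Lie product $\llbracket\cdot,\cdot\rrbracket$ remains bilinear in its arguments even when those arguments carry explicit $\bar t_k$ factors.

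First, I would verify the symmetry property. For $\{\bar K_s\}$ there is nothing to prove beyond Theorem \ref{Th-3}: since $\llbracket \bar K_k,\bar K_s\rrbracket=0$, we have $\bar K_s'[\bar K_k]=\bar K_k'[\bar K_s]$, which is exactly the definition of a symmetry on solutions of $\tilde v_{t_k}=\bar K_k$. For $\bar\tau_m^k$, which carries explicit time dependence, I would compute
\begin{equation*}
\bar\tau_{m,t_k}^k - \bar K_k'[\bar\tau_m^k]
= k(h\bar K_{m+k-1}+\bar K_{m+k-2}) + \llbracket \bar\sigma_m,\bar K_k\rrbracket
+ k\bar t_k\bigl(h\llbracket\bar K_{m+k-1},\bar K_k\rrbracket+\llbracket\bar K_{m+k-2},\bar K_k\rrbracket\bigr),
\end{equation*}
and then invoke \eqref{K-S-alg-2} to rewrite $\llbracket\bar\sigma_m,\bar K_k\rrbracket=-k(h\bar K_{m+k-1}+\bar K_{m+k-2})$ together with $\llbracket\bar K_j,\bar K_k\rrbracket=0$. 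All three remaining terms cancel, confirming that $\bar\tau_m^k$ is a symmetry.

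Next, I would establish the algebra \eqref{4.27}. The first relation is immediate from Theorem \ref{Th-3}. For the second, using bilinearity,
\begin{equation*}
\llbracket \bar K_s,\bar\tau_m^k\rrbracket = k\bar t_k\bigl(h\llbracket\bar K_s,\bar K_{m+k-1}\rrbracket+\llbracket\bar K_s,\bar K_{m+k-2}\rrbracket\bigr)+\llbracket\bar K_s,\bar\sigma_m\rrbracket,
\end{equation*}
and \eqref{K-S-alg} collapses this to $s(h\bar K_{m+s-1}+\bar K_{m+s-2})$. The third relation is the delicate one and the main obstacle: expanding $\llbracket\bar\tau_s^k,\bar\tau_m^k\rrbracket$ by bilinearity yields nine terms, of which four vanish by $\llbracket\bar K_i,\bar K_j\rrbracket=0$, and the remaining four cross terms involve $\llbracket\bar K_{\bullet},\bar\sigma_{\bullet}\rrbracket$. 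Applying \eqref{K-S-alg} term by term and collecting coefficients of $\bar K_{s+m+k-2},\bar K_{s+m+k-3},\bar K_{s+m+k-4}$ yields a factor $(s-m)$ and the grouping
\begin{equation*}
(s-m)\,k\bar t_k\bigl[h(h\bar K_{s+m+k-2}+\bar K_{s+m+k-3})+(h\bar K_{s+m+k-3}+\bar K_{s+m+k-4})\bigr],
\end{equation*}
to which the $\llbracket\bar\sigma_s,\bar\sigma_m\rrbracket=(s-m)(h\bar\sigma_{s+m-1}+\bar\sigma_{s+m-2})$ piece is added. The resulting expression reassembles precisely into $(s-m)(h\bar\tau_{m+s-1}^k+\bar\tau_{m+s-2}^k)$ after recognizing that the bracketed quantity equals $h\cdot(h\bar K_{\ast}+\bar K_{\ast})+(h\bar K_{\ast}+\bar K_{\ast})$ with indices matching those of $\bar\tau_{m+s-1}^k$ and $\bar\tau_{m+s-2}^k$.

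The main obstacle, as indicated above, is the careful index and coefficient bookkeeping in the last computation: one must check that the coefficients of the three $\bar K$-terms reduce to $s-m, 2(s-m), s-m$ so that they regroup into the $h\bar\tau+\bar\tau$ pattern. Once this telescoping is verified, the closure of the algebra follows and Theorem \ref{Th-4-2} is established.
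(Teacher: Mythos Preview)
Your proposal is correct and follows exactly the approach the paper implicitly takes: the paper states Theorem~\ref{Th-4-2} as an immediate consequence of the flow algebra \eqref{K-S-alg} in Theorem~\ref{Th-3} (``With these structures we have the following''), and your computation supplies precisely the standard bookkeeping that fills in those details. The minor miscount (nine terms, four vanishing, then four cross terms plus the $\llbracket\bar\sigma_s,\bar\sigma_m\rrbracket$ term) is harmless, and the index/coefficient telescoping you outline does indeed regroup into $(s-m)(h\bar\tau^k_{m+s-1}+\bar\tau^k_{m+s-2})$.
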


\subsection{Squared eigenfunction symmetry constraint}\label{sec-4-2}

\subsubsection{Symmetry and constraint}\label{sec-4-2-1}

Consider the triplet \eqref{dmkp-iso-lax} and its adjoint form
with eigenfunctions $\Phi$ and $\Phi^*$, respectively,
we introduce the adjoint form
\begin{subequations}\label{dmkp-iso-lax-adj}
\begin{align}
&\bar{L} \Phi=\lambda \Phi, ~~~~~ \bar{L}^* \Phi^*=\lambda \Phi^*,\label{dmkp-sp-adj}\\
& \Phi_{\bar{x}}=\bar{A}_1 \Phi, ~~~~ \Phi^*_{\bar{x}}=-\bar{A}^*_1 \Phi^*, \label{dmkp-iso-A1-adj}\\
& \Phi_{\bar{t}_s}=\bar{A}_s \Phi, ~~~ \Phi^*_{\bar{t}_s}=-\bar{A}^*_s \Phi^*,~~s=1, 2, \cdots, \label{dmkp-flow-lax-adj}
\end{align}
\end{subequations}
where $\bar{L}^*$ and $\bar{A}^*_s$ are the adjoint operators of $\bar L$ and $\bar{A}_s$, respectively,
defined through the inner product \eqref{2.8}.
It can be proved that  \cite{CZZ-2021}
\begin{equation}
\omega=h(\Phi E \Delta^{-1} \Phi^*)_{\bar{x}}=-\sum^{\infty}_{s=1} \bar K_s (\tilde v),
\end{equation}
i.e., $\omega$ is the squared eigenfunction symmetry of the D$\Delta$mKP hierarchy.
Note that in \cite{CZZ-2021}, by the constraint
\begin{equation}\label{sym-con-RT}
	\bar{v}=\Phi E \Delta^{-1} \Phi^*,
\end{equation}
the D$\Delta$mKP system composed by (with $\bar x=\bar{t}_1$)
\begin{subequations}\label{dmkp-iso-lax-adj-kn}
\begin{align}
&\bar{L}  \Phi =\lambda \Phi, \label{dmkp-iso-L-adj-kn}\\
&\Phi_{\bar{t}_s}=-\bar{A}_s \Phi,~~
\Phi^*_{\bar{t}_s}=-\bar{A}^*_s \Phi^*,~~s=1, 2, \cdots \label{dmkp-flow-lax-adj-kn}
\end{align}
\end{subequations}
gives rise to the  the spectral problem and positive hierarchy of the relativistic Toda,
not those of the DNLS as we expect (cf. in the continuous case \cite{Chen-mKP-2002}).

\subsubsection{The discrete CLL spectral problem}\label{sec-4-2-2}

In the following, instead of \eqref{sym-con-RT}, we introduce another constraint
 (note that $\bar v=1+ h\tilde v$)
\begin{equation}\label{sym-con}
	\tilde{v}=h \Phi E \Delta^{-1} \Phi^*=a_nb_n,~~~ a_n=\Phi,~~~b_n=h E\Delta^{-1}\Phi^*,
\end{equation}
under which, first, one can rewrite $\bar L$ in terms of $a_n$ and $b_n$.

\begin{proposition}\label{P-4-1}
For the two triplets in \eqref{dmkp-iso-lax-adj},
with the constraint \eqref{sym-con}, the pseudo-difference operator $\bar L$  \eqref{dmkp-L}
can be written into the form
\begin{equation}\label{dmkp-L-com}
\bar{L}=h^{-1}(1+ha_n b_n)\Delta+ a_n \Delta^{-1} b_n \Delta.
\end{equation}
\end{proposition}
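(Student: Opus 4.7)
The plan is to verify that the candidate operator
\[
\bar{L}_c := h^{-1}(1+ha_nb_n)\Delta + a_n\Delta^{-1}b_n\Delta
\]
coincides with $\bar{L}$ in \eqref{dmkp-L}. Since the recursive system \eqref{dmkp-v0}--\eqref{dmkp-vs}, together with the asymptotic condition $\bar v_s|_{\tilde v=0}=0$, determines every coefficient $\bar v_s$ uniquely from $\tilde v$, it suffices to expand $\bar L_c$ in powers of $\Delta$ and check that the resulting coefficients satisfy the same recursion and boundary conditions.

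The computational backbone is a pair of evolution equations for $a_n$ and $b_n$ extracted from the adjoint Lax triad \eqref{dmkp-iso-lax-adj}. Directly, $\Phi_{\bar x}=\bar{A}_1\Phi$ gives $a_{n,\bar x}=h^{-1}\bar v(a_{n+1}-a_n)$. For the starred variable, I compute $\bar{A}_1^*=h^{-1}\Delta^*\bar v$ with $\Delta^*=E^{-1}-1$, and use the identity $\Phi^*_n=h^{-1}(b_n-b_{n-1})$ (which follows by inverting $b_n=hE\Delta^{-1}\Phi^*$) to obtain the symmetric formula $b_{n,\bar x}=h^{-1}\bar v(b_n-b_{n-1})$. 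Applying the extended Leibniz rule \eqref{2.7} with $s=-1$ (so $\mathrm{C}_{-1}^i=(-1)^i$), I expand
\begin{equation*}
a_n\Delta^{-1}b_n\Delta=\sum_{s=0}^{\infty}(-1)^s a_n(\Delta^s b_{n-1-s})\Delta^{-s},
\end{equation*}
which furnishes candidate coefficients $\bar v_s^c:=h^{-s}(-1)^s a_n(\Delta^s b_{n-1-s})$. The base case $s=0$ then follows at once: combining the two evolution formulas yields $(a_nb_n)_{\bar x}=h^{-1}\bar v\,\Delta(a_nb_{n-1})$, hence $(\ln\bar v)_{\bar x}=\Delta(a_nb_{n-1})$, and applying $\Delta^{-1}$ with vanishing boundary condition gives $\bar v_0=a_nb_{n-1}=\bar v_0^c$, in agreement with \eqref{dmkp-v0}.

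The main obstacle is the inductive step for $s\geq 1$, verifying that $\bar v_s^c$ satisfies the recursion \eqref{dmkp-vs}. The left-hand side $h\bar v(E\bar v_{s+1}^c)-h(E^{-1-s}\bar v)\bar v_{s+1}^c$ telescopes into a shifted $\Delta$-difference of $a_n(\Delta^{s+1}b_{n-2-s})$; the evolution formulas for $a_n$ and $b_n$ then align this with $h\bar v_{s,\bar x}^c-\bar v(\Delta\bar v_s^c)$, while the finite summation $\sum_{i=1}^s\mathrm{C}_s^{i-1}(-1)^{s-i+1}h^{i-s}\bar v_i^c E^{-s-1}\Delta^{s-i+1}\bar v$ is absorbed by the induction hypothesis on the lower $\bar v_i^c$. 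This manipulation is mechanical but delicate in its bookkeeping of signs, shifts, and binomial coefficients. The asymptotic condition $\bar v_s^c|_{\tilde v=0}=0$ is automatic because the decay of $\tilde v=a_nb_n$ at infinity forces that of $a_n$ or $b_n$, annihilating every $\bar v_s^c$ there. Uniqueness of the recursion then gives $\bar L_c=\bar L$, completing the proof.
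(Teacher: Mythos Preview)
Your proposal is correct and follows essentially the same route as the paper: derive the evolution equations \eqref{4.35} for $a_n,b_n$ from the adjoint triad, expand $a_n\Delta^{-1}b_n\Delta$ via the Leibniz rule \eqref{2.7} with $s=-1$ to read off the candidate coefficients $\bar v_s^c=(-1)^sh^{-s}a_nE^{-s-1}\Delta^s b_n$, check the base case through \eqref{dmkp-v0}, and then induct on the recursion \eqref{dmkp-vs}. The paper carries the inductive bookkeeping out in full in Appendix~\ref{App-1}, whereas you only sketch it; one small imprecision is your boundary-condition remark (decay of the product $a_nb_n$ does not by itself force decay of a factor), but since $a_n=\Phi$ and $b_n=hE\Delta^{-1}\Phi^*$ inherit decay from the eigenfunctions this is harmless.
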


\begin{proof}
In light of formula \eqref{2.7} with $s=-1$,
we only need to prove that all the $\{\bar{v}_s\}^\infty_{s=0}$
defined by \eqref{dmkp-vs} can be expressed in terms of $a_n$ and $b_n$ as the following:
\begin{equation}\label{vs-compact}
\bar{v}_s=(-1)^sh^{-s}a_nE^{-s-1}\Delta^sb_n,~~~s=0, 1, 2, \cdots.
\end{equation}
Then \eqref{dmkp-L-com} follows.

The proof of \eqref{vs-compact} is similar to Proposition 3 in \cite{CZZ-2021}.
Below we sketch main steps and leave the details in Appendix \ref{App-1}.
First, it follows from \eqref{dmkp-iso-A1-adj} and \eqref{sym-con} that
\begin{equation}\label{4.35}
	a_{n,\bar{x}}=h^{-1}(1+ ha_nb_n)(a_{n+1}-a_n),~~b_{n,\bar{x}}=h^{-1}(1+ha_nb_n)(b_n-b_{n-1}),
\end{equation}
which will be used to eliminate derivatives in \eqref{dmkp-v},
in particular, from \eqref{dmkp-v0} and \eqref{dmkp-v1}  we have
\begin{equation*}
\bar{v}_0=a_n b_{n-1},~~\bar{v}_1=-h^{-1}a_nE^{-2}\Delta b_n.
\end{equation*}
Then, to make use of mathematical induction, we assume the following expression holds for
$i=0, 1, 2, \cdots, s$, i.e.
\begin{equation}\label{4.37}
\bar{v}_i=(-1)^i h^{-i }a_nE^{-i-1}\Delta^i b_n,~~~i=0, 1, 2, \cdots, s.
\end{equation}
Then, for $s+1$, the right-hand side of \eqref{dmkp-vs} gives rise to (refer to Appendix A for details)
\begin{equation}\label{vs-rhs}
(-1)^{s+1}h^{-s}[ (1+ha_n b_n)E(a_nE^{-s-2}\Delta^{s+1}b_n)
 - (E^{-s-1}(1+ ha_n b_n))(a_nE^{-s-2}\Delta^{s+1}b_n)].
\end{equation}
Compared with the left-hand side of \eqref{dmkp-vs}, we can find
\begin{equation*}
\bar{v}_{s+1}=(-1)^{s+1}h^{-s-1}a_nE^{-s-2}\Delta^{s+1}b_n.
\end{equation*}
The proof is completed.

\end{proof}

\begin{theorem}\label{Th-5}
With the constraint \eqref{sym-con}, the spectral problem \eqref{4.11} gives rise to the
discrete CLL spectral problem
\begin{equation}\label{delta-psi}
\left(\begin{array}{c} \bar \psi_{1,n+1}\\ \bar \psi_{2,n+1} \end{array}\right)=
\left(\begin{array}{cc} -h\eta^2+1+h a_nb_n & h \eta a_n\\
h \eta b_n & 1 \end{array}\right)
\left(\begin{array}{c} \bar \psi_{1,n}\\ \bar \psi_{2,n} \end{array}\right).
\end{equation}
\end{theorem}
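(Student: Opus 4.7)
The plan is to build directly on Proposition~\ref{P-4-1}, which has already rewritten $\bar L$ under the constraint \eqref{sym-con} in the compact form $\bar L = h^{-1}(1+ha_n b_n)\Delta + a_n \Delta^{-1} b_n \Delta$. Inserting this into the scalar equation $\bar L\bar\phi_n = \lambda\bar\phi_n$ gives
\begin{equation*}
h^{-1}(1+ha_n b_n)(\bar\phi_{n+1}-\bar\phi_n) + a_n\Delta^{-1}\bigl(b_n(\bar\phi_{n+1}-\bar\phi_n)\bigr) = \lambda\bar\phi_n.
\end{equation*}
The presence of $\Delta^{-1}$ makes this nonlocal, so my first step is to localise the problem by introducing the auxiliary variable $\chi_n := \Delta^{-1}(b_n\Delta\bar\phi_n)$, which by definition obeys the one-step recurrence $\chi_{n+1}-\chi_n = b_n(\bar\phi_{n+1}-\bar\phi_n)$. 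The spectral equation then becomes a pair of linear relations in $(\bar\phi_n,\chi_n,\bar\phi_{n+1},\chi_{n+1})$.

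The second step is to solve this pair for $(\bar\phi_{n+1},\chi_{n+1})$ algebraically, eliminating $\bar\phi_{n+1}$ from the $\chi$-recurrence using the first equation. A short calculation yields an explicit $2\times 2$ shift matrix acting on $(\bar\phi_n,\chi_n)^T$, all of whose entries carry a common prefactor $(1+ha_n b_n)^{-1}$; with the identification $\lambda = -\eta^2$ (as in the continuous analogue of Sec.\ref{sec-3-2}) the $(1,1)$ entry already takes the form $(1+ha_n b_n-h\eta^2)/(1+ha_n b_n)$, which is the target $(1-h\eta^2+ha_n b_n)$ only up to that common prefactor.

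The final step is to remove this prefactor by a gauge transformation, and simultaneously to rescale so that the off-diagonal entries absorb the factors of $\eta$ demanded by \eqref{delta-psi}. Concretely, I would set $\bar\phi_n = g_n\bar\psi_{1,n}$ and $\chi_n = -\eta g_n\bar\psi_{2,n}$, where $g_n$ is defined telescopically by the single condition $g_{n+1}/g_n = (1+ha_n b_n)^{-1}$. The ratio $g_n/g_{n+1} = 1+ha_n b_n$ is exactly what is needed to cancel the common prefactor in both rows simultaneously, while the extra factor $-\eta$ in the definition of $\bar\psi_{2,n}$ turns $-ha_n$ into $+h\eta a_n$ in the $(1,2)$ entry and $-h\eta^2 b_n$ into $+h\eta b_n$ in the $(2,1)$ entry, yielding precisely \eqref{delta-psi}. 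The main obstacle I expect is bookkeeping rather than conceptual: one must check that the same scalar gauge factor $g_n$ clears the prefactor in \emph{both} rows of the shift matrix at once, and that the signs and powers of $\eta$ in the rescaling of $\chi_n$ are compatible with the prescribed form of the $(1,2)$ and $(2,1)$ entries; once this is verified, the theorem follows.
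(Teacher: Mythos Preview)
Your proposal is correct and follows essentially the same route as the paper: introduce an auxiliary variable to localise the $\Delta^{-1}$ (the paper uses $\bar\varphi_{2,n}$ with $\eta\,\Delta\bar\varphi_{2,n}=-b_n\Delta\bar\varphi_{1,n}$, which is just your $\chi_n=-\eta\,\bar\varphi_{2,n}$), set $\lambda=-\eta^2$, and then clear the common prefactor $(1+ha_nb_n)^{-1}$ by a scalar telescoping gauge. Your recursive definition $g_{n+1}/g_n=(1+ha_nb_n)^{-1}$ is in fact a slightly cleaner way to record the gauge than the paper's shorthand $(1+ha_nb_n)^n$, and the $-\eta$ rescaling you propose for $\chi_n$ is exactly equivalent to the paper's choice of $\bar\varphi_{2,n}$; all signs and $\eta$-powers check out.
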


\begin{proof}
In light of Proposition \ref{P-4-1}, the spectral problem \eqref{4.11} is written as
\begin{equation}\label{com-sp}
\bar L \bar \phi =[h^{-1}(1+h a_n b_n )\Delta+ a_n \Delta^{-1} b_n \Delta ]\bar \phi=\lambda \bar \phi.
\end{equation}
Introducing $\varphi_{1,n}$ and $\varphi_{2,n}$ by
\begin{equation}
\lambda=-\eta^2,~~
\bar \varphi_{1,n}=\bar \phi,~~ \eta \Delta \bar \varphi_{2,n}=-b_n\Delta \bar \varphi_{1,n},
\end{equation}
\eqref{com-sp} yields
\begin{subequations}\label{delta-phi}
\begin{align}
&(1+h a_nb_n)\Delta \bar \varphi_{1,n}=-h\eta^2 \bar \varphi_{1,n}+h \eta a_n \bar \varphi_{2,n},\\
&(1+h a_nb_n)\Delta \bar \varphi_{2,n}=h  \eta b_n \bar \varphi_{1,n}-h a_n b_n \bar \varphi_{2,n},
\end{align}
\end{subequations}
i.e.
\begin{align*}
&(1+h a_nb_n)\bar \varphi_{1,n+1}=(-h\eta^2+1+h a_nb_n)\bar\varphi_{1,n}+h \eta a_n \bar\varphi_{2,n},\\
&(1+h a_nb_n)\bar \varphi_{2,n+1}=h \eta b_n \bar\varphi_{1,n}+\bar\varphi_{2,n}.
\end{align*}
Next, we introduce a gauge transformation
\begin{equation*}
\left(\begin{array}{c} \bar\psi_{1,n}\\ \bar\psi_{2,n} \end{array}\right)=
\left(\begin{array}{cc} (1+h a_nb_n)^n & 0 \\ 0 & (1+h a_nb_n)^n \end{array}\right)
\left(\begin{array}{c} \bar\varphi_{1,n}\\ \bar\varphi_{2,n} \end{array}\right),
\end{equation*}
and immediately we arrive at the spectral problem \eqref{delta-psi}.
If we further introduce
\begin{equation*}
\left(\begin{array}{c} \bar\psi_{1,n}\\ \bar\psi_{2,n} \end{array}\right)=
\left(\begin{array}{cc} 1 & 0\\   0 &1/\eta \end{array}\right)
\left(\begin{array}{c}   \psi_{1,n}\\  \psi_{2,n} \end{array}\right),
\end{equation*}
\eqref{delta-psi} gives rise to
\begin{equation}\label{sd-CLL-sp}
\left(\begin{array}{c}  \psi_{1,n+1}\\  \psi_{2,n+1} \end{array}\right)=
\left(\begin{array}{cc} -h\eta^2+1+h a_nb_n & h   a_n\\
h \eta^2 b_n & 1 \end{array}\right)
\left(\begin{array}{c}  \psi_{1,n}\\  \psi_{2,n} \end{array}\right),
\end{equation}
which is known as the  discrete CLL spectral problem \cite{DJM-1983},
see also \cite{T-JPA-2002,KMW-TMP-2013}.

\end{proof}

We will introduce more about the discrete CLL spectral problem \eqref{delta-psi} and \eqref{sd-CLL-sp}
in Appendix \ref{App-2}.

\subsubsection{The differential-difference CLL hiararchy}\label{sec-4-2-3}

Next, let us look at  \eqref{dmkp-flow-lax-adj-kn} and reveal their explicit forms
under the constraint \eqref{sym-con} and the $(a_n,b_n)$ formulation.
The procedure is similar to \cite{CZZ-2021} but we need examine all details.
First, we have the following.

\begin{proposition}\label{P-4-2}
For $\bar L$ given in \eqref{dmkp-L-com}, $A_s$ obeys the following two recursive relations:
\begin{subequations}\label{dmkp-as-recursive}
\begin{align}
&\bar{A}_{s+1}=\bar{L}\bar{A}_s+h^{-1}(ha_nb_n+1)(E(\bar{L}^s)_0)\Delta
-a_n\Delta^{-1}(E\Delta^{-1}\bar{A}_s^*E^{-1}\Delta b_n)\Delta,
\label{dmkp-as-recursive1} \\
&\bar{A}_{s+1}=\bar{A}_s\bar{L}+h^{-1}(\bar{L}^s)_0(ha_n b_n+1)
\Delta-(\bar{A}_s a_n)\Delta^{-1}b_n \Delta,\label{dmkp-as-recursive2}
\end{align}
\end{subequations}
where $(\bar{L}^s)_0$ stands for  the constant term of  $\bar{L}^s$ with respect to $\Delta$.
\end{proposition}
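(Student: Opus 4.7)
The plan is to derive both relations by expanding $\bar{L}^{s+1} = \bar{L}\cdot\bar{L}^s$ for \eqref{dmkp-as-recursive1} and $\bar{L}^{s+1} = \bar{L}^s\cdot\bar{L}$ for \eqref{dmkp-as-recursive2}, splitting $\bar{L}^s = \bar{A}_s + (\bar{L}^s)_0 + (\bar{L}^s)_{\leq -1}$, and then projecting onto the $(\cdot)_{\geq 1}$ part to isolate $\bar{A}_{s+1}$. In each relation this reduces the analysis to three contributions: the cross term with $\bar{A}_s$, the scalar contribution from $(\bar{L}^s)_0$, and the tail from $(\bar{L}^s)_{\leq -1}$. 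For bookkeeping I decompose $\bar{L} = \bar{L}_+ + \bar{L}_-$ with $\bar{L}_+ = h^{-1}(1+ha_nb_n)\Delta$ and $\bar{L}_- = a_n\Delta^{-1}b_n\Delta$.

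The tail contribution drops out immediately: since $\bar{L}$ has top order $\Delta^1$, its composition with any operator of order $\leq -1$ yields an operator of order $\leq 0$, so $(\cdot)_{\geq 1}$ annihilates it on either side. The scalar piece is handled by the operator identity $\Delta f = (Ef)\Delta + (\Delta f)$ for a function $f$ acting as multiplication: taking $f = (\bar{L}^s)_0$ gives $(\bar{L}_+ (\bar{L}^s)_0)_{\geq 1} = h^{-1}(1+ha_nb_n)(E(\bar{L}^s)_0)\Delta$, while $\bar{L}_- (\bar{L}^s)_0$ has top order $\Delta^0$ and contributes nothing. The mirror computation $((\bar{L}^s)_0\bar{L})_{\geq 1} = h^{-1}(\bar{L}^s)_0(1+ha_nb_n)\Delta$ is equally direct. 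These match the middle terms of \eqref{dmkp-as-recursive1} and \eqref{dmkp-as-recursive2}, respectively.

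For the cross term of \eqref{dmkp-as-recursive2}, $\bar{A}_s\bar{L}_+$ is already pure positive, so only $\bar{A}_s\bar{L}_- = \bar{A}_s\,a_n\Delta^{-1}b_n\Delta$ can contribute to $\leq 0$. The extended Leibniz rule \eqref{2.7} expands each $\Delta^{s-j}a_n$ (with $a_n$ as multiplication) into the finite sum $\sum_{i=0}^{s-j} C_{s-j}^i(E^{s-j-i}\Delta^i a_n)\Delta^{s-j-i}$; composing with $\Delta^{-1}b_n\Delta$, one checks that every summand with $s-j-i\geq 1$ lands in the $\geq 1$ part, while the $s-j-i=0$ summands collect to the function $\bar{A}_s(a_n)$ multiplying $\Delta^{-1}b_n\Delta$. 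Hence $(\bar{A}_s\bar{L}_-)_{\leq 0} = (\bar{A}_s a_n)\Delta^{-1}b_n\Delta$, which, subtracted from $\bar{A}_s\bar{L}$, reproduces the last term of \eqref{dmkp-as-recursive2}.

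The main obstacle is the cross term for \eqref{dmkp-as-recursive1}, which reduces to proving
\[ (\Delta^{-1}b_n\Delta\bar{A}_s)_{\leq 0} = \Delta^{-1}\bigl(E\Delta^{-1}\bar{A}_s^* E^{-1}\Delta b_n\bigr)\Delta. \]
My plan is to verify this by separately matching the $\Delta^0$ and $\leq -1$ parts. The $\leq -1$ part is pinned down by the discrete adjoint identity $(\Delta^{-1}g\cdot Q)_{\leq -1} = \Delta^{-1}Q^*(g)$, valid for any pure difference operator $Q$ and function $g$ (itself a consequence of \eqref{2.7} together with $\Delta^* = -E^{-1}\Delta$); applied with $g=b_n$ and $Q=\Delta\bar{A}_s$, it produces $-\Delta^{-1}\bar{A}_s^*(E^{-1}\Delta b_n)$. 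A direct Leibniz expansion then identifies the $\Delta^0$ coefficient of the left side as $\Delta^{-1}\bar{A}_s^*(E^{-1}\Delta b_n)$. Writing $G = \Delta^{-1}\bar{A}_s^*(E^{-1}\Delta b_n)$ and $f = EG$, the Leibniz expansion of $\Delta^{-1}f\Delta$ gives exactly $\Delta^0$ coefficient $E^{-1}f = G$ and $\leq -1$ tail $-\Delta^{-1}(\Delta G) = -\Delta^{-1}\bar{A}_s^*(E^{-1}\Delta b_n)$, matching both contributions. The delicate pairing of the $\Delta^0$ boundary value with the $\leq -1$ adjoint tail, reconciled through the outer shift $E$, is the crux of the argument.
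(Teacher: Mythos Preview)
Your argument is correct and follows the same route as the paper's: both expand $\bar{L}^{s+1}$ from the left and from the right, split $\bar{L}^s$ into $\bar{A}_s$, $(\bar{L}^s)_0$ and the $\le -1$ tail, and then read off the $\ge 1$ projection of each piece. The paper simply quotes the two cross-term identities from \cite{CZZ-2021} and stops there, whereas you supply an independent verification of the adjoint-type identity $(\Delta^{-1}b_n\Delta\bar{A}_s)_{\le 0}=\Delta^{-1}\bigl(E\Delta^{-1}\bar{A}_s^*E^{-1}\Delta b_n\bigr)\Delta$ by separately matching its $\Delta^0$ and $\le -1$ components.
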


\begin{proof}
Using the identities \cite{CZZ-2021}
\begin{equation*}
(\bar{A}_sa_n\Delta^{-1}b_n\Delta)_{\le 0}=(\bar{A}_s a_n)\Delta^{-1}b_n,
~~\Delta^{-1}b_n\Delta \bar{A}_s=\Delta^{-1}(E\Delta^{-1}\bar{A}_s^*E^{-1}\Delta b_n)\Delta,
\end{equation*}
and noticing that $\bar{A}_{s+1}=(\bar{L}\bar{L}^s)_{\ge 1}$ and
$\bar{A}_{s+1}=(\bar{L}^s\bar{L})_{\ge 1}$,
one can obtain \eqref{dmkp-as-recursive1} and \eqref{dmkp-as-recursive2} respectively.
\end{proof}

\noindent
Next, we have the following.

\begin{theorem}\label{Th-6}
Equation \eqref{dmkp-flow-lax-adj-kn} gives rise to the recursive hierarchy
\begin{equation}\label{recursive-ab}
\left(\begin{array}{c}a_n\\b_n\end{array}\right)_{\bar{t}_{s+1}}
=\bar T \left(\begin{array}{c}a_n\\b_n\end{array}\right)_{\bar{t}_{s}},
\end{equation}
where the initial equation reads
\begin{equation}\label{int}
a_{n,\bar{t}_1}=h^{-1}(1+ h a_nb_n)(a_{n+1}-a_n),~~b_{n,\bar{t}_1}=h^{-1}(1+ h a_nb_n)(b_n-b_{n-1}),
\end{equation}
and the recursion operator is
\begin{equation}
\bar T=\left(\begin{array}{cc}\bar T_{11} & \bar T_{12} \\ \bar T_{21} & \bar T_{22}\end{array}\right)
\end{equation}
with entries
\begin{equation}\label{T-ij}
\begin{array}{l}
\bar T_{11}=h^{-1}(ha_nb_n\Delta+\Delta+ha_n\Delta^{-1}b_n\Delta)
+h^{-1}(ha_nb_n+1)(\Delta a_n)E\Delta^{-1}\frac{hb_n}{ha_nb_n+1},\\
\bar T_{12}=h^{-1}(ha_nb_n+1)(\Delta a_n)E\Delta^{-1}\frac{ha_n}{ha_n b_n+1}
-a_n \Delta^{-1}(\Delta a_n),\\
\bar T_{21}=h^{-1}(ha_nb_n+1)(E^{-1}\Delta b_n)
\Delta^{-1}\frac{hb_n}{ha_nb_n+1}+b_nE\Delta^{-1}(E\Delta b_n),\\
\bar T_{22}=h^{-1}(-ha_nb_nE^{-1}\Delta-E^{-1}\Delta
+hb_nE\Delta^{-1}a_nE^{-1}\Delta)\\
~~~~~~~~~ +h^{-1}(ha_nb_n+1)(E^{-1}\Delta b_n)\Delta^{-1}\frac{ha_n}{ha_n b_n+1}.
\end{array}
\end{equation}
\end{theorem}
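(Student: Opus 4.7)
The plan is to convert the two operator identities of Proposition \ref{P-4-2} into two scalar evolution identities for $a_n$ and $b_n$, and then read the recursion operator $\bar{T}$ off from the resulting expressions. The pivot between the operator level and the vector level will be a pair of bridge identities translating the flow equations \eqref{dmkp-flow-lax-adj-kn} into flows for $a_n$ and $b_n$.

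First I would establish those bridge identities. Since $a_n=\Phi$, differentiating in $\bar t_s$ and using \eqref{dmkp-flow-lax-adj-kn} gives immediately $a_{n,\bar t_s}=\bar A_s a_n$ (up to the sign convention in \eqref{dmkp-flow-lax-adj-kn}). Since $b_n=hE\Delta^{-1}\Phi^*$, i.e.\ $\Phi^*=h^{-1}E^{-1}\Delta b_n$, differentiating in $\bar t_s$ and substituting $\Phi^*_{\bar t_s}=-\bar A_s^*\Phi^*$ yields $E\Delta^{-1}\bar A_s^*E^{-1}\Delta b_n = -b_{n,\bar t_s}$. These two identities are what will enable the operator recursion of Proposition \ref{P-4-2} to descend to the level of $(a_n,b_n)$. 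The base case $s=1$ then follows directly by plugging $\bar A_1=h^{-1}(1+ha_nb_n)\Delta$ and its adjoint $\bar A_1^*=-\Delta E^{-1}h^{-1}(1+ha_nb_n)$ into the bridge identities, giving \eqref{int}.

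For the inductive step I would apply identity \eqref{dmkp-as-recursive1} to $a_n$. The first summand reproduces $\bar L\bar A_s a_n=\bar L\,a_{n,\bar t_s}$, matching the purely local block of $\bar T_{11}$: expanding $h^{-1}(ha_nb_n\Delta+\Delta+ha_n\Delta^{-1}b_n\Delta)$ recovers exactly $\bar L$. The last summand $-a_n\Delta^{-1}[(E\Delta^{-1}\bar A_s^*E^{-1}\Delta b_n)\,\Delta a_n]$ becomes, via the second bridge identity, $a_n\Delta^{-1}[b_{n,\bar t_s}(\Delta a_n)]$, identifying the term $-a_n\Delta^{-1}(\Delta a_n)$ in $\bar T_{12}$. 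The middle summand $h^{-1}(ha_nb_n+1)(E(\bar L^s)_0)\Delta a_n$ must be re-expressed using the identity $\tilde v_{\bar t_s}=a_{n,\bar t_s}b_n+a_nb_{n,\bar t_s}$ together with the iteration of the closed form \eqref{4.8} for $\bar v_0=(\bar L)_0$; this converts $E(\bar L^s)_0$ into a $\Delta^{-1}$-primitive of $hb_n/(ha_nb_n+1)$ applied to $a_{n,\bar t_s}$ plus an analogous term in $b_{n,\bar t_s}$, producing exactly the non-local tails of $\bar T_{11}$ and $\bar T_{12}$. Applying the dual identity \eqref{dmkp-as-recursive2} together with the adjoint bridge gives $b_{n,\bar t_{s+1}}=\bar T_{21}a_{n,\bar t_s}+\bar T_{22}b_{n,\bar t_s}$, and assembling the two scalar equalities yields \eqref{recursive-ab}.

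The main obstacle will be the middle terms carrying $(\bar L^s)_0$: converting them into the explicit non-local blocks in $\bar T$ demands a clean expression for $(\bar L^s)_0$ in terms of the bridge data, and the appearance of the rational factor $1/(ha_nb_n+1)$ inside a $\Delta^{-1}$ is the most delicate check. Tracking shift operators and signs across $E\Delta^{-1}$ and $\Delta^{-1}E$ through the adjoint is where most of the bookkeeping lives; the analogous computation in \cite{CZZ-2021} for the relativistic-Toda constraint provides a template, but since the constraint \eqref{sym-con} differs from \eqref{sym-con-RT} by the extra factor of $h$ and the normalization of $b_n$, each step must be reworked and re-verified carefully to obtain precisely the entries \eqref{T-ij}.
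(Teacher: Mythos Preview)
Your proposal is correct and follows essentially the same route as the paper: the bridge identities $a_{n,\bar t_s}=\bar A_s a_n$ and $b_{n,\bar t_s}=-E\Delta^{-1}\bar A_s^*E^{-1}\Delta b_n$, the application of \eqref{dmkp-as-recursive1} to $a_n$, and the conversion of the middle term via $(\bar L^s)_0=\Delta^{-1}\bigl(\tfrac{hb_n}{ha_nb_n+1}a_{n,\bar t_s}+\tfrac{ha_n}{ha_nb_n+1}b_{n,\bar t_s}\bigr)$ are exactly the steps the paper takes. The only nuance is that the paper derives this last identity by invoking the residue form $\tilde v_{\bar t_s}=h^{-1}\bar v\,\Delta(\bar L^s)_0$ from \cite{CZZ-2021}, i.e.\ the $\bar t_s$-analogue of \eqref{4.8} coming from \eqref{dmkp-iso-Ltm}, rather than an ``iteration'' of \eqref{4.8} itself.
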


\begin{proof}
The initial equation \eqref{int} is obvious.
Next, recalling the residue form of the D$\Delta$mKP hierarchy \cite{CZZ-2021},
%\begin{equation}
%\bar{v}_{\bar{t}_m}=\bar{K}_m=\bar v \Delta\mathop{\mathrm{Res}}\limits_{\Delta}(L^m\Delta^{-1}),
%\end{equation}
\begin{equation}
\tilde{v}_{\bar{t}_m}=\bar{K}_m=h^{-1}\bar v \Delta\mathop{\mathrm{Res}}\limits_{\Delta}(L^m\Delta^{-1}),
\end{equation}
one has
\begin{equation*}
(\bar{L}^s)_0=\Delta^{-1}(\ln\bar{v})_{\bar{t}_s}=\Delta^{-1}(\ln(ha_nb_n+1))_{\bar{t}_s}
=\Delta^{-1}\left(\frac{hb_n}{ha_nb_n+1}a_{n,\bar{t}_s}+\frac{ha_n}{ha_nb_n+1}b_{n,\bar{t}_s}\right).
\end{equation*}
Then using \eqref{dmkp-as-recursive1} we have
\begin{align}
a_{n,\bar{t}_{s+1}}&=\bar{A}_{s+1}a_n \nonumber \\
&=\bar{L} a_{n,\bar{t}_{s}}
+h^{-1}(ha_nb_n+1)(E(\bar{L}^s)_0)\Delta a_n+a_n\Delta^{-1}(\Delta a_n) b_{n,t_{s}}, \label{a-L0}
\end{align}
where we have used
\[a_{n,\bar{t}_{s}}=\bar{A}_{s}a_n, ~~ b_{n,t_{s}}=- E\Delta^{-1}\bar{A}_s^*E^{-1}\Delta b_n.\]
Then, substituting $(\bar{L}^s)_0$ into \eqref{a-L0} we have
\[a_{n,\bar{t}_{s+1}}=\bar T_{11} a_{n,\bar{t}_{s}} + \bar T_{12} b_{n,\bar{t}_{s}},\]
where $\bar T_{11}$ and $\bar T_{12}$ are given in \eqref{T-ij}.
Similarly, one can prove that
\[b_{n,\bar{t}_{s+1}}=\bar T_{21} a_{n,\bar{t}_{s}} + \bar T_{22} b_{n,\bar{t}_{s}}.\]
Thus, the proof is completed.

\end{proof}

In the next section, we will show a continuum limit in which
\eqref{recursive-ab} recovers  the CLL  hierarchy.

\section{Continuum limits of the D$\Delta$mKP}\label{sec-5}

Our purpose of this section is to implement continuum limits on the integrable structures of the D$\Delta$mKP
and match them with those of the continuous case.
The technique of implementing the continuum limits is similar to that we have used in \cite{FHTZ-2013}
for investigating the D$\Delta$KP,
where we introduced a notion called  ``degree'' as a practical and effective tool to
figure out the leading terms in the continuum limits.
In the following we will first present a unified scheme of the continuum limits
and calculate degree of each object of the D$\Delta$mKP.
After that the continuum limits will be investigated.
We will skip some details, for which one can refer to the analogues in \cite{FHTZ-2013}.
We will see that all the results in Section \ref{sec-3} for the mKP
will be recovered from those of the  D$\Delta$mKP in the continuum limits.

%\subsection{Preliminary}\label{sec-5-1}

\subsection{Scheme of the continuum limits}\label{sec-5-1}

Our plan for the continuum limit is as the following (cf.\cite{FHTZ-2013} for the D$\Delta$KP case).
\begin{itemize}
\item Let $n \rightarrow \infty$ and $h \rightarrow 0$ simultaneously such that $nh$ is finite.
\item Introduce an auxiliary continuous variable
\begin{equation}
\tau=nh,
\end{equation}
which maps $\bar f(n+j)$ to $f(\tau+jh)$.
\item Define coordinates relation
\begin{equation}
x=\bar{x}+\tau,~~~y=-\frac{1}{2}h\tau,~~~t_m=\bar{t}_m,
\end{equation}
under which one has
\begin{equation}
\partial_{\bar{x}}=\partial_x,~~~\partial_{\tau}=\partial_x-\frac{1}{2}h\partial_y,~~~
\partial_{\bar{t}_m}=\partial_{t_m}.
\end{equation}
\item Define functions relation
\begin{subequations}\label{com-lim-fcts}
\begin{align}
&\bar{v}(n,\bar{x},\bar{t}_m)=1+h\tilde{v}(n,\bar{x},\bar{t}_m)=1+h v (x,y,t_m),\label{com-lim-v}\\
%&\bar{v}_j(n,\bar{x},\bar{t}_m)=v_j(x,y,t_m),~~~j=1, 2,\cdots,\\
& \bar \phi (n,\bar{x},\bar{t}_m)=\phi (x,y,t_m).
\end{align}
\end{subequations}
\end{itemize}

Base on the above scheme one can quickly find
\begin{align}
\Delta &=h \partial_{\tau}+\frac{1}{2!}h^2\partial_{\tau}^2+\frac{1}{3!}h^3\partial_{\tau}^3+O(h^4) \nonumber\\
&=h\partial_x+\frac{h^2}{2}(\partial_x^2-\partial_y)+\frac{h^3}{6}(\partial_x^3-3\partial_x\partial_y)+O(h^4).
\label{5.5}
\end{align}
Consequently,
\begin{align}
\Delta^{-1}&=h^{-1}\partial_x^{-1}+\frac{1}{2}(\partial_x^{-2}\partial_y^{-1}-1)+O(h),
\end{align}
and in general,
\begin{equation}
\Delta^j=h^{j}\partial^j_x+O(h^{j+1}), ~~ j\in \mathbb{Z}.
\end{equation}

\subsection{Degrees}\label{sec-5-2}

By degree we mean (see Definition 5.1 in \cite{FHTZ-2013})
in the scheme presented in Section \ref{sec-5-1}
the order of the leading term of a function $\bar f(n, \bar x, \bar t_m)$ (or operator $\bar P(\tilde v, \Delta)$)
after  expanded in a series   of $h$, denoted by deg $\bar f$ (or deg $\bar P$).

Let us first examine the degrees of $\{\bar{v}_j\}$. From \eqref{dmkp-v} we find
\begin{equation*}
\bar{v}_0 =\Delta^{-1}(\ln (1+h\tilde{v}))_{\bar{x}}=v+O(h),
\end{equation*}
and
\begin{equation*}
\bar{v}_1=-\frac{1}{2}\tilde{v}^2+\frac{1}{2}(\partial_x^{-1}\partial_y-\partial_x)\tilde{v}+O(h)
=v_1+O(h).
\end{equation*}
For general $j$, if
we assume $\mathrm{deg}\, \bar v_j=0$ for $j=1,2,\cdots, s$,
then, by analyzing both sides of \eqref{dmkp-vs},
we can find the degree of $\bar v_{s+1}$ must be zero too.
Thus, we have
\[ \mathrm{deg}\, \bar v_j=0\]
and then we may assume
\[\bar v_j(n, \bar x, \bar t_m)=v_j(x,y,t_m)+ O(h),~~ j=1,2,\cdots.\]
Thus, for the pseudo-difference operator $\bar L$  given in \eqref{dmkp-L} we have
\begin{equation}
\mathrm{deg}\, \bar L=0,~~ \bar{L}=L+O(h),
\end{equation}
and consequently,
\begin{equation}\label{5.12a}
\bar L \bar \phi - \lambda \bar \phi = (L\phi-\lambda \phi )+ O(h).
\end{equation}
In addition, for $\bar A_1$,  using \eqref{5.5}, we find
\begin{equation}\label{A1-A2}
\bar A_1 = A_1 +\frac{h}{2}(A_2 - \partial_y) + O(h^2),
\end{equation}
and it then follows that
\begin{align}\label{5.12b}
\bar \phi_{\bar x}- \bar A_1 \bar \phi = \frac{h}{2}(\phi_y -A_2\phi)+ O(h^2)
\end{align}
and
\begin{equation*}
 \bar L_{\bar x}-[\bar A_1, \bar L]= \frac{h}{2}(L_y -[A_2, L])+ O(h^2).
\end{equation*}
Besides, for the first few flows of $\{\bar K_j\}$ and $\{\bar\sigma_j\}$, one can also check that
\begin{align*}
&\mathrm{deg}\,\bar K_j =0,~~  \bar K_j=K_j+O(h),~~ (j=1,2,3),\\
& \mathrm{deg}\,\bar \sigma_j =0,~~  \bar \sigma_j=\sigma_j+O(h),~~ (j=2,3).
\end{align*}

In the following, in order to obtain degrees of more elements,
we recall some rules for calculating degrees developed in \cite{FHTZ-2013}.
Let us list them below.

\begin{proposition}\label{prop:deg-1}
For  functions (or operators) $\bar f(\tilde v)$ and $\bar g(\tilde v)$, it holds that
\begin{subequations}\label{deg-1}
\begin{align}
&\deg\bar f\cdot\bar g=\deg\bar f+\deg\bar g,\label{deg:fg}\\
&\deg(\bar f+\bar g)\geq\min\{\deg\bar f,\deg\bar g\}. \label{deg:f+g}
\end{align}
\end{subequations}
\end{proposition}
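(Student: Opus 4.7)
The plan is to deduce both identities directly from the definition of degree given just before the proposition. Writing out the $h$-expansions
\[
\bar f = f_p h^p + f_{p+1} h^{p+1} + \cdots,\qquad
\bar g = g_q h^q + g_{q+1} h^{q+1} + \cdots,
\]
where $p=\deg \bar f$ and $q=\deg \bar g$, the coefficients $f_p, g_q$ are by definition the first non-vanishing terms (ordinary functions, or in the operator case $h$-independent difference/differential operators built from $\tilde v$, $\partial_x$, $\partial_y$, $\Delta$, $\ldots$).

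For part (a), I would form the Cauchy product
\[
\bar f \cdot \bar g = \sum_{k\ge p+q}\Bigl(\sum_{\substack{i+j=k\\ i\ge p,\, j\ge q}} f_i\, g_j\Bigr) h^{k},
\]
whose lowest-order coefficient is $f_p g_q$; once this leading product is known to be nonzero, it pins down $\deg(\bar f\cdot\bar g)=p+q$. For operators the argument is identical with multiplication replaced by composition, the leading composite $P_p Q_q$ playing the role of a principal symbol in the semiclassical expansion.

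For part (b), I would split into cases on $p$ versus $q$. When $p\neq q$, say $p<q$, one reads off $\bar f+\bar g = f_p h^p + O(h^{p+1})$ and hence $\deg(\bar f+\bar g)=p=\min\{p,q\}$. When $p=q$, the coefficient of $h^p$ in $\bar f+\bar g$ is $f_p+g_p$; generically this is nonzero and the degree equals $p$, while an accidental cancellation can only raise the degree above $p$. In every case $\deg(\bar f+\bar g)\ge\min\{p,q\}$, which is precisely why part (b) is stated as an inequality while part (a) is an equality.

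The main (and only) delicate point is the non-vanishing of $f_p g_q$ in part (a). Since $f_p$ and $g_q$ are nonzero elements of (the operator algebra acting on) the Schwartz-type space $\bar S[\tilde v]$, which carries no nontrivial zero divisors for the coefficient structures that actually arise in the D$\Delta$mKP calculation, no cancellation of the leading product can occur. The detailed form of this ``no-cancellation'' principle is already carried out in the parallel treatment \cite{FHTZ-2013} for the D$\Delta$KP case, so I would simply invoke that analysis rather than reproduce the bookkeeping here.
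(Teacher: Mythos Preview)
Your argument is correct and is the standard elementary derivation of these degree rules from the $h$-series expansion. Note, however, that the paper does not actually prove Proposition~\ref{prop:deg-1}: it merely \emph{recalls} the rules from \cite{FHTZ-2013} (``we recall some rules for calculating degrees developed in \cite{FHTZ-2013}. Let us list them below''), so there is no proof in the paper to compare against. Your write-up therefore supplies more than the paper does, and your concluding remark that the no-zero-divisor verification is handled in \cite{FHTZ-2013} is exactly in the spirit of how the paper treats this proposition.
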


\begin{proposition}\label{prop:deg-2}
For functions $\bar f(\tilde v),\bar g(\tilde v) \in \bar S[\tilde v]$
and $f(v), g(v) \in S[v]$,  suppose that there are relations in the continuum limit:
\begin{align*}
\bar f(\tilde v)=f(v)h^i+O(h^{i+1}),\quad \bar g(\tilde v)=g(v)h^j+O(h^{j+1}).
\end{align*}
It then holds that
\begin{align}
&\llbracket \bar f(\tilde v),\bar g(\tilde v)\rrbracket_{\tilde v}
=\llbracket f(v),g(v)\rrbracket_{v}\, h^{i+j-1}+O(h^{i+j}),\label{f-g-deg1}\\
&\deg\llbracket \bar f(\tilde v),\bar g(\tilde v)\rrbracket_{\tilde v}
\geq\deg\b f(\b u)+\deg\b g(\b u)-1,\label{f-g-deg2}
\end{align}
where the subscripts $\tilde v$ and $v$ indicate the G\^ateaux derivatives in $\llbracket \cdot,\,\cdot \rrbracket$
are defined with respect to $\tilde v$ and $v$, respectively.
\end{proposition}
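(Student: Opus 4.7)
The plan is to prove the expansion \eqref{f-g-deg1} directly from the definition of the G\^ateaux derivative together with the continuum-limit scheme of Section \ref{sec-5-1}; the degree inequality \eqref{f-g-deg2} will follow as an immediate corollary.

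First, I would unfold the Lie product as $\llbracket \bar f, \bar g\rrbracket_{\tilde v} = \bar f'(\tilde v)[\bar g] - \bar g'(\tilde v)[\bar f]$ and analyze each G\^ateaux derivative separately. By definition
\[
\bar f'(\tilde v)[\bar g] = \frac{d}{d\varepsilon}\bar f(\tilde v + \varepsilon \bar g)\Big|_{\varepsilon=0},
\]
and the identification $\tilde v(n,\bar x,\bar t_m)\equiv v(x,y,t_m)$ from \eqref{com-lim-v} converts the perturbation $\tilde v \to \tilde v + \varepsilon \bar g$ into the continuum perturbation $v \to v + \varepsilon\bigl(h^{j}g(v)+O(h^{j+1})\bigr)$.

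Second, substituting this into the hypothesis $\bar f(\tilde v)=h^{i}f(v)+O(h^{i+1})$ gives
\[
\bar f(\tilde v+\varepsilon\bar g)=h^{i}f\bigl(v+\varepsilon h^{j}g(v)+O(\varepsilon h^{j+1})\bigr)+O(h^{i+1}),
\]
differentiating in $\varepsilon$ at zero and retaining the leading term yields the expansion of $\bar f'(\tilde v)[\bar g]$ in powers of $h$, with leading coefficient proportional to $f'(v)[g(v)]$. Performing the same analysis for $\bar g'(\tilde v)[\bar f]$ and subtracting, the continuum Lie product $\llbracket f(v),g(v)\rrbracket_v$ emerges as the leading coefficient, giving
\[
\llbracket\bar f,\bar g\rrbracket_{\tilde v}=h^{\,i+j-1}\,\llbracket f,g\rrbracket_v+O(h^{\,i+j}),
\]
where the precise power of $h$ requires careful bookkeeping of the discrete operators $\Delta=h\partial_x+O(h^{2})$, $\Delta^{-1}=h^{-1}\partial_x^{-1}+O(1)$, and the shifts $E^{k}$ that typically appear inside $\bar f$ and $\bar g$. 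The degree inequality \eqref{f-g-deg2} is then immediate: whether or not $\llbracket f,g\rrbracket_v$ vanishes, the remainder $O(h^{\,i+j})$ forces $\deg\llbracket\bar f,\bar g\rrbracket_{\tilde v}\geq i+j-1=\deg\bar f+\deg\bar g-1$.

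The principal obstacle is precisely this $h$-bookkeeping through G\^ateaux derivatives of compositions of discrete operators: the product and chain rules combine nontrivially with the $h$-expansions of $\Delta^{\pm1}$ and $E^{k}$, and one must use Proposition \ref{prop:deg-1} repeatedly to control the individual factors and to identify when cancellations push the leading order up. This closely parallels the argument carried out in \cite{FHTZ-2013} for D$\Delta$KP, whose template can be adapted with the substitution $\bar A_1=h^{-1}\bar v\Delta$ appropriate to the D$\Delta$mKP setting.
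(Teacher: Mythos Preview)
The paper does not prove this proposition in-text; it is explicitly recalled from \cite{FHTZ-2013} (see the sentence immediately preceding Proposition~\ref{prop:deg-1}), so there is no paper proof to compare your attempt against. Your outline---unfold the Lie bracket, use the identification $\tilde v\equiv v$ from \eqref{com-lim-v} to translate the perturbation $\tilde v\mapsto\tilde v+\varepsilon\bar g$ into $v\mapsto v+\varepsilon(h^{j}g+\cdots)$, and track powers of $h$---is precisely the natural route and is what that reference carries out in the D$\Delta$KP setting.

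One point does deserve scrutiny, though: your own computation does not actually produce the exponent $i+j-1$ in \eqref{f-g-deg1}. From
\[
\bar f(\tilde v+\varepsilon\bar g)=h^{i}f\bigl(v+\varepsilon h^{j}g(v)+\cdots\bigr)+O(h^{i+1})
\]
differentiation at $\varepsilon=0$ gives $\bar f'(\tilde v)[\bar g]=h^{i+j}f'(v)[g]+O(h^{i+j+1})$, hence $\llbracket\bar f,\bar g\rrbracket_{\tilde v}=h^{i+j}\llbracket f,g\rrbracket_{v}+O(h^{i+j+1})$. No extra factor $h^{-1}$ appears, because in the present D$\Delta$mKP scheme the discrete and continuous dependent variables coincide. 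This is in fact consistent with how the paper itself applies the proposition in Section~\ref{sec-5-2-2} (where $i=j=0$ and the leading term is taken at order $h^{0}$), and the residual ``$\bar u$'' in \eqref{f-g-deg2} strongly suggests the statement was transcribed verbatim from \cite{FHTZ-2013}, where the discrete/continuous variable relation carried an additional $h$. The inequality \eqref{f-g-deg2} survives either way (it is weaker), which is all the paper subsequently needs; but you should not claim that your argument establishes \eqref{f-g-deg1} with the exponent as written without identifying where the missing $h^{-1}$ would come from.
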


In addition, similar to Lemma 5.1, Lemma 5.2 and Proposition 5.6 in \cite{FHTZ-2013},
for the D$\Delta$mKP system, we have the following.

\begin{lemma}\label{Lem-5-1}
For the difference operator
\begin{equation}\label{W-bar}
\bar{W}_m=\sum_{j=0}^{m-1}\bar{w}_j(\tilde{v})\Delta^{m-j},~~~\mathrm{with}~
\bar{w}_j(\tilde{v})\in \bar S[\tilde v],~ ~ j=0,1,m-1,
\end{equation}
if $\bar{W}_m$ satisfies $\left[\bar{W}_m, \bar{L} \right]=0$, then $\bar{W}_m=0$.
\end{lemma}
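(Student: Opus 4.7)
The plan is to show $\bar W_m=0$ by successively killing its coefficients $\bar w_0,\bar w_1,\dots,\bar w_{m-1}$ through a descending analysis of the powers of $\Delta$ in the commutator $[\bar W_m,\bar L]$.

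First, note that $\bar L$ has leading $\Delta$-term $h^{-1}\bar v\Delta$ while $\bar W_m$ has leading term $\bar w_0\Delta^m$, so only these two pieces contribute to $\Delta^{m+1}$ in $[\bar W_m,\bar L]$; all remaining pieces $\bar v_0,\,h\bar v_1\Delta^{-1},\dots$ of $\bar L$ only feed into powers $\Delta^m$ and lower. Using the extended Leibniz formula \eqref{2.7}, a short computation gives the coefficient of $\Delta^{m+1}$ in the commutator as
\[
h^{-1}\bigl(\bar w_0(E^m\bar v)-\bar v(E\bar w_0)\bigr),
\]
so setting it to zero yields the first-order linear difference equation $E\bar w_0=(E^m\bar v/\bar v)\,\bar w_0$. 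Iterating from any base point $n_0$ and telescoping gives, for $k>m$,
\[
\bar w_0(n_0+k)=\bar w_0(n_0)\,\frac{\prod_{i=k}^{k+m-1}\bar v(n_0+i)}{\prod_{i=0}^{m-1}\bar v(n_0+i)}.
\]
Because $\bar v=1+h\tilde v$ with $\tilde v\in\bar S[\tilde v]$ decaying as $|n|\to\infty$, the numerator tends to $1$ as $k\to\infty$ (it is a product of $m$ factors each going to $1$), while the denominator is a fixed nonzero quantity; Schwartz decay of $\bar w_0$ then forces the left-hand side to $0$, so $\bar w_0(n_0)=0$ for every $n_0$, i.e.\ $\bar w_0\equiv 0$.

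With $\bar w_0$ eliminated, $\bar W_m$ now begins at $\bar w_1\Delta^{m-1}$, and the identical analysis applied to the $\Delta^m$ coefficient of $[\bar W_m,\bar L]$ produces the same form of difference equation for $\bar w_1$, forcing $\bar w_1\equiv 0$. A finite induction annihilates $\bar w_2,\dots,\bar w_{m-1}$ in turn, and therefore $\bar W_m=0$. The decisive step is the first one, where the Schwartz decay of $\bar w_0$ must be leveraged against the telescoping product in order to promote a first-order difference equation into a global vanishing statement; once this is in place, the induction through the lower coefficients is essentially formal, in the same spirit as the paragraph that follows Lemma \ref{Lem-1}.
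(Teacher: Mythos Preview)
Your proof is correct and follows essentially the same route as the paper: extract the top $\Delta$-coefficient of $[\bar W_m,\bar L]$ to obtain the first-order difference equation $E\bar w_0=(E^m\bar v/\bar v)\,\bar w_0$, use the asymptotics $\bar v\to 1$ and $\bar w_0\to 0$ as $|n|\to\infty$ to force $\bar w_0\equiv 0$, and then iterate down the coefficients. The only cosmetic difference is that the paper writes the general solution as $\bar w_0=c\prod_{j=0}^{m-1}(E^j\bar v)$ and reads off $c=0$ from the limit, whereas you carry out the telescoping explicitly from a base point $n_0$; these are the same computation.
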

\begin{proof}
Writing the left-hand side of  $\left[\bar{W}_m, \bar{L} \right]=0$ in terms of  the power of $\Delta$,
from the highest order term, we find
\begin{equation*}
(E\bar{w}_0)=\frac{(E^m\bar{v})}{\bar{v}}\bar{w}_0.
\end{equation*}
This is a first order ordinary difference equation with respect to $\bar{w}_0$.
It allows a general solution
\[\bar{w}_0 =c\prod_{j=0}^{m-1}(E^j\bar{v}),\]
and since $\bar{w}_0\to 0$ and $\bar{v}\to 1$ as $|n|\to \infty$,
the constant $c$ has to be zero, and thus $\bar{w}_0=0$.
In the same way we can successfully work out $\bar{w}_j=0$ for $j=1,2,\cdots, m-1$.

\end{proof}

Similarly, the following holds.
\begin{lemma}\label{Lem-5-2}
For the differential operator
\begin{equation*}
W_m=\sum_{j=0}^{m-1}w_j(v)\partial^{m-j},~~~\mathrm{with}~w_j(v)\in S[v],~~  j=0,1,2,\cdots, m-1,
\end{equation*}
if $W_m$ satisfies $[W, L]=0$, then $W_m=0$.
\end{lemma}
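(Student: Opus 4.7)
The plan is to mirror the proof of Lemma \ref{Lem-5-1}: expand $[W_m, L]$ as a formal pseudo-differential operator in powers of $\partial_x$ and peel off the coefficients from the top down, forcing each $w_j(v)$ to vanish in turn.

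First I would expand $W_m L$ and $L W_m$ using the Leibniz rule $\partial_x^k f = \sum_{\ell \geq 0} \binom{k}{\ell} f^{(\ell)} \partial_x^{k-\ell}$ (with its usual extension to negative $k$). The $\partial_x^{m+1}$ contributions cancel automatically, so $[W_m, L]$ has order at most $m$. A direct calculation of the coefficient of $\partial_x^m$ gives $(w_0 v_0 + w_1) - (w_{0,x} + w_1 + v_0 w_0) = -w_{0,x}$, whence $w_{0,x} = 0$. Since $w_0 = w_0(v, v_x, v_y, \ldots)\in S[v]$ and $v$ together with all its derivatives decays as $|x|, |y| \to \infty$, the relation $w_{0,x} \equiv 0$ forces $w_0$ to be independent of $x$, hence equal to its asymptotic value $w_0|_{v=0}$; the Schwartz decay condition then makes this constant zero, so $w_0 \equiv 0$.

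Having eliminated $w_0$, I would repeat the calculation at order $\partial_x^{m-1}$ and iterate. Assuming inductively that $w_0 = w_1 = \cdots = w_{k-1} = 0$, the coefficient of $\partial_x^{m-k}$ in $[W_m, L]$ collapses to $-w_{k,x}$: the only surviving contributions at that order arise from $w_k \partial_x^{m-k}$ commuted with the $\partial_x + v_0$ part of $L$, and these cancel except for the $w_{k,x}$ term produced by $\partial_x \cdot w_k$. The same Schwartz argument then yields $w_k \equiv 0$, and after $m$ iterations we conclude $W_m \equiv 0$.

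The main technical point requiring care is verifying, in the inductive step, that the infinitely many $v_i \partial_x^{-i}$ tails of $L$ do not contaminate the coefficient of $\partial_x^{m-k}$. This amounts to checking that both $v_i \partial_x^{-i} \cdot w_k \partial_x^{m-k}$ and $w_k \partial_x^{m-k} \cdot v_i \partial_x^{-i}$ have leading order $m - k - i$, which is strictly less than $m-k$ for $i \geq 1$, and similarly for any $w_j$ with $j > k$ (whose contributions only appear at orders $\leq m - j < m - k$). Once this bookkeeping is confirmed, the induction is straightforward and the proof parallels that of Lemma \ref{Lem-5-1}, with the ordinary Leibniz rule playing the role that the extended shift identity \eqref{2.7} played in the discrete setting.
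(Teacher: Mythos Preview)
Your proposal is correct and is precisely the continuous analogue of the paper's proof of Lemma~\ref{Lem-5-1}; the paper itself gives no separate argument for Lemma~\ref{Lem-5-2} beyond the phrase ``Similarly, the following holds''. Your computation that the top coefficient of $[W_m,L]$ is $-w_{0,x}$ (rather than a first-order ODE with variable coefficient, as in the discrete case) correctly reflects the fact that the leading term of $L$ is $\partial_x$ with constant coefficient $1$, in contrast to $h^{-1}\bar v\,\Delta$ in $\bar L$; the boundary/decay argument you give to conclude $w_0\equiv 0$, and the induction, are exactly what the paper intends.
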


With these two lemmas, for the difference operator $\bar{W}_m$ given in \eqref{W-bar}, we can also prove that
(cf. Proposition 5.6 in \cite{FHTZ-2013})
\begin{equation}\label{deg:W-L}
\deg [\bar{W}_m, \bar L]=0.
\end{equation}
The above two lemmas and relation \eqref{deg:W-L} are important to  achieve the relations between
$\bar A_m, \bar B_m$ and $A_m, B_m$ in the continuous limit.
Below we present these relations, which can be proved via
similar procedures as for Proposition 5.7 and 5.8 in \cite{FHTZ-2013}.

\begin{proposition}
In the continuum limit, we have
\begin{subequations}\label{5.22}
\begin{align}
&{\rm{deg}}~\bar{A}_m=0,~~ \bar{A}_m=A_m+O(h), ~~ m \geq 1, \label{5.22a}\\
&{\rm{deg}}~\bar{B}_m=0,~~ \bar{B}_m=B_m+O(h), ~~ m\geq 2.
\end{align}
\end{subequations}
\end{proposition}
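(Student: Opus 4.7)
The plan is to expand $\bar A_m$ and $\bar B_m$ in powers of $h$ according to the scheme of Section~\ref{sec-5-1}, identify their leading parts as differential operators in $\partial_x$, and invoke Lemma~\ref{Lem-5-2} (uniqueness of differential operators with a prescribed commutator with $L$) to conclude. The base cases $m=1,2,3$ for $\bar A_m$ and $m=2,3$ for $\bar B_m$ have been verified directly in the text, so it suffices to treat generic $m$.

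For $\bar A_m$: writing $\bar A_m = \sum_{j=0}^{m-1} h^{-(m-j)} \bar a_j \Delta^{m-j}$ and using \eqref{5.5} to expand each factor as $h^{-(m-j)}\Delta^{m-j} = \partial_x^{m-j}+O(h)$, combined with the already-established $\deg \bar v_s = 0$, yields $\bar A_m = \bar A_m^{(0)} + O(h)$, where $\bar A_m^{(0)} = \sum_{j=0}^{m-1}\bar a_j^{(0)}\partial_x^{m-j}$ is a differential operator in $\partial_x$ with coefficients in $S[v]$ and with leading-order boundary value $\bar A_m^{(0)}|_{v=0} = \partial_x^m$ inherited from \eqref{4.5}. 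To identify $\bar A_m^{(0)}$ with $A_m$ without circular use of $\bar K_m = K_m + O(h)$, the plan is to rely on the characterization $\bar A_m = (\bar L^m)_{\geq 1}$ together with $\bar L = L + O(h)$: since the expansion of each $\Delta^k$ begins at $h^k\partial_x^k$ with subleading corrections keeping the power of $\partial_x$ at least $k$, the projection $(\cdot)_{\geq 1}$ commutes with the limit $h\to 0$, giving $(\bar L^m)_{\geq 1}^{(0)} = (L^m)_{\geq 1} = A_m$. Equivalently, the Lax equation \eqref{dmkp-iso-Ltm} at leading order yields $[\bar A_m^{(0)} - A_m, L] = 0$, so Lemma~\ref{Lem-5-2} applied to the difference forces it to vanish.

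For $\bar B_m$ the argument is parallel but requires extra care with the boundary term. Under the continuum scheme, $\bar x = x + 2y/h$ and $nh = -2y/h$ are both of order $h^{-1}$; nevertheless $h^{-(m-1)}\bar x\Delta^m$ produces $2y\partial_x^m + O(h)$ and $h^{-(m-1)}(\bar x+nh)\Delta^{m-1}$ produces $x\partial_x^{m-1}+O(h)$, so $\bar B_m|_{\tilde v=0}$ matches $B_m|_{v=0}$ from \eqref{Bm-b} at order $h^0$. The nonisospectral Lax equation \eqref{dmkp-non-Ltm} contains the extra source $h\bar L^m + \bar L^{m-1}$: the first term is $O(h)$ and drops, while the second contributes $L^{m-1}$, so at leading order $L_{t_m} = [\bar B_m^{(0)}, L] + L^{m-1}$, matching the continuous \eqref{mkp-non-Ltm}. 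Hence $[\bar B_m^{(0)} - B_m, L] = 0$, and Lemma~\ref{Lem-5-2} gives $\bar B_m^{(0)} = B_m$.

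The main obstacle is the bookkeeping of powers of $h$. In $\bar A_m$ the negative powers $h^{-(m-j)}$ must cancel exactly against the positive powers arising from the expansion of $\Delta^{m-j}$, and one must verify that the subleading corrections in \eqref{5.5} do not feed spuriously into $\bar A_m^{(0)}$. For $\bar B_m$ this is compounded by the $h^{-1}$ scaling of $\bar x$ and $nh$, whose interplay with the expansion of $\Delta^m$ and $\Delta^{m-1}$ is precisely what produces the finite $2y\partial_x^m$ and $x\partial_x^{m-1}$ terms at leading order. Once this power-counting is settled, the argument reduces, as for the D$\Delta$KP case in \cite{FHTZ-2013}, to a structural application of the uniqueness Lemma~\ref{Lem-5-2}.
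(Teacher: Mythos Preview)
Your plan is correct and matches the paper's approach, which defers to the analogous Propositions~5.7--5.8 in \cite{FHTZ-2013} and relies on Lemmas~\ref{Lem-5-1}--\ref{Lem-5-2} together with the degree relation \eqref{deg:W-L} to pin down the leading parts of $\bar A_m$ and $\bar B_m$; your use of the Lax equations \eqref{dmkp-iso-Ltm}, \eqref{dmkp-non-Ltm} to reduce to $[\,\bar A_m^{(0)}-A_m,\,L\,]=0$ (resp.\ $[\,\bar B_m^{(0)}-B_m,\,L\,]=0$) and then invoke Lemma~\ref{Lem-5-2} is exactly the intended mechanism, and your verification that $\bar B_m|_{\tilde v=0}$ reproduces $2y\partial_x^m+x\partial_x^{m-1}$ at leading order is correct.

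One small inaccuracy worth fixing: the justification you give for the projection $(\cdot)_{\geq 1}$ commuting with the limit --- that subleading corrections in $\Delta^k$ ``keep the power of $\partial_x$ at least $k$'' --- is false (already at order $h^2$, $\Delta$ contains $-\tfrac12\partial_y$, and $\Delta^{-1}$ at order $h^0$ has a constant term). The correct reason is simpler: in each term $h^{-k}\bar c_k\Delta^k$ of $\bar L^m$ the subleading pieces of $\Delta^k$ carry at least one extra power of $h$, so they contribute only to $O(h)$ regardless of their $\partial_x$-content; hence the $h^0$-part of $(\bar L^m)_{\geq 1}$ in $\Delta$ is exactly $(L^m)_{\geq 1}$ in $\partial_x$.
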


Thus, from the zero curvature representation \eqref{4.9} and \eqref{4.16} and
noticing the relation \eqref{A1-A2}, we have
\begin{align*}
\tilde{v}_{\bar{t}_m}=\bar{K}_m(\tilde{v})
&=(\bar{A}_{m,\bar{x}}-[\bar{A}_1, \bar{A}_m])\Delta^{-1},\\
&= \frac{1}{2}(A_{m,y}-[A_2,A_m])\partial_x^{-1} +O(h)\\
&= K_m+O(h),
\end{align*}
i.e.
\begin{align}\label{deg:K}
 {\rm{deg}}~\bar{K}_m=0,~~ \bar{K}_m=K_m+O(h), ~~ m=1,2, \cdots.
\end{align}
Similarly, from \eqref{4.16} we have
\begin{align}\label{deg:S}
 {\rm{deg}}~\bar{\sigma}_m=0, ~~  \bar{\sigma}_m=\sigma_m+O(h), ~~m=2,3,\cdots.
\end{align}

\subsection{Integrable structures and the symmetry constraint}\label{sec-5-3}

\subsubsection{Lax triads}\label{sec-5-2-1}

For the Lax triad \eqref{dmkp-iso-lax}, in addition to \eqref{5.12a} and \eqref{5.12b}, from \eqref{5.22a} we also have
\[\bar \phi_{\bar t_m}-\bar A_m \bar \phi=(\phi_{t_m}-A_m \phi)+O(h),\]
and the compatibility condition \eqref{dmkp-iso-com} yields
\begin{align*}
&\bar{L}_{\bar{x}}-[\bar{A}_1,\bar{L}]= \frac{h}{2}(L_y -[A_2, L])+ O(h^2),\\
&\bar{L}_{\bar{t}_m}-[\bar{A}_m,\bar{L}] =L_{t_m}-[A_m, L] +O(h), \\
&\bar{A}_{1,\bar{t}_m}-\bar{A}_{m,\bar{x}}+[\bar{A}_1,\bar{A}_m]
= \frac{h}{2}(A_{2,t_m}-A_{m,y}+[A_2,A_m])+O(h^2).
\end{align*}
For the nonisospectral case, we have
\begin{align*}
& \bar L \bar \phi - \lambda \bar \phi = (L\phi-\lambda \phi )+ O(h),\\
& \bar \phi_{\bar x}- \bar A_1 \bar \phi = \frac{h}{2}(\phi_y -A_2\phi)+ O(h^2),\\
& \bar \phi_{\bar t_m}-\bar B_m \bar \phi=(\phi_{t_m}-B_m \phi)+O(h),
\end{align*}
and
\begin{align*}
&\bar{L}_{\bar{x}}-[\bar{A}_1,\bar{L}]= \frac{h}{2}(L_y -[A_2, L])+ O(h^2),\\
&\bar{L}_{\bar{t}_m}-[\bar{B}_m,\bar{L}] =L_{t_m}-[B_m, L] +L^{m-1}+O(h), \\
&\bar{A}_{1,\bar{t}_m}-\bar{B}_{m,\bar{x}}+[\bar{A}_1,\bar{B}_m]
= \frac{h}{2}(A_{2,t_m}-B_{m,y}+[A_2, B_m])+O(h^2).
\end{align*}

\subsubsection{Symmetries and algebra}\label{sec-5-2-2}

The algebraic structures in \eqref{flow-alg-e} and \eqref{K-S-alg} are apparently different.
However, they agree with each other in the continuum limits.
To see that, taking \eqref{K-S-alg-2} as an example, we just need to calculate degrees of its two sides.
Using Proposition \ref{prop:deg-2}, \eqref{deg:K} and \eqref{deg:S}, we immediately arrive at
\begin{align*}
 \llbracket \bar{K}_m,\bar{\sigma}_n \rrbracket_{\tilde v}-(m (h \bar{K}_{m+n-1}+\bar{K}_{m+n-2}))
=  \llbracket  {K}_m, {\sigma}_n \rrbracket_{v}-m  {K}_{m+n-2}+O(h),
\end{align*}
which means  \eqref{flow-alg-e ksigma} is recovered from \eqref{K-S-alg-2} in the continuum limit.
In the following we skip presenting details and just list the results.

\begin{proposition}\label{prop:alg}
Except those relations involved with $\sigma_1$,
in the continuum limits the algebraic structure \eqref{flow-alg-e} of flows, symmetries
\eqref{corolary2  symmetries} and their structure \eqref{flow alge-1}
can be obtained from those of the D$\Delta$mKP case, i.e. \eqref{K-S-alg}, \eqref{4.26} and \eqref{4.27}.
\end{proposition}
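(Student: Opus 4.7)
The plan is to verify, case by case, that each algebraic identity in the discrete collection \eqref{K-S-alg}, \eqref{4.26}, \eqref{4.27} degenerates, as $h \to 0$, to the corresponding continuous identity in \eqref{flow-alg-e}, \eqref{corolary2  symmetries}, \eqref{flow alge-1}. The ingredients are already in place: the leading-order expansions $\bar{K}_m = K_m + O(h)$ and $\bar{\sigma}_s = \sigma_s + O(h)$ from \eqref{deg:K}--\eqref{deg:S}; the degree calculus for sums, products and Lie brackets in Propositions \ref{prop:deg-1}--\ref{prop:deg-2}; and the coordinate matching $\bar{t}_k = t_k$ built into the continuum scheme of Section \ref{sec-5-1}.

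The argument is clearest on \eqref{K-S-alg-2}, which is worked out as an example just above the statement. One substitutes the expansions on both sides: the right-hand side $m(h\bar{K}_{m+s-1}+\bar{K}_{m+s-2})$ becomes $mK_{m+s-2}+O(h)$ because the factor $h$ suppresses the first term, while the left-hand side $\llbracket \bar{K}_m,\bar{\sigma}_s\rrbracket_{\tilde v}$ becomes $\llbracket K_m,\sigma_s\rrbracket_v + O(h)$ by Proposition \ref{prop:deg-2}. Subtracting the two discrete sides (which are equal by \eqref{K-S-alg-2}) and letting $h\to 0$ gives exactly \eqref{flow-alg-e ksigma} for $s\geq 2$. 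The relation \eqref{flow-alg-e kk} is immediate from $\llbracket \bar{K}_m,\bar{K}_s\rrbracket = 0$ together with $\bar{K}_j = K_j + O(h)$, and \eqref{flow-alg-e ssigma} follows by precisely the same mechanism, with the $h\bar{\sigma}_{m+s-1}$ piece on the discrete side dropping out and leaving $(m-s)\sigma_{m+s-2}$.

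For the symmetries, I would first show that the discrete symmetry $\bar{\tau}^k_m$ itself has the right continuum limit. By definition $\bar{\tau}^k_m = k\bar{t}_k(h\bar{K}_{m+k-1}+\bar{K}_{m+k-2}) + \bar{\sigma}_m$; substituting $\bar{t}_k = t_k$, $\bar{K}_j = K_j + O(h)$, $\bar{\sigma}_m = \sigma_m + O(h)$ yields $\bar{\tau}^k_m = k\,t_k K_{m+k-2} + \sigma_m + O(h)$, which, after a relabeling that interchanges the roles of the ``fixed equation index'' and the ``varying symmetry index'', coincides with the continuous $\tau^m_k$ of \eqref{corolary2  symmetries}. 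Feeding this expansion into the three brackets of \eqref{4.27} and running the same leading-order extraction reduces them exactly to \eqref{flow alge-1}; for instance, $\llbracket \bar{\tau}^k_s,\bar{\tau}^k_m\rrbracket = (s-m)(h\bar{\tau}^k_{m+s-1}+\bar{\tau}^k_{m+s-2})$ yields in the limit $\llbracket \tau^m_s,\tau^m_k\rrbracket = (s-k)\tau^m_{s+k-2}$, while the remaining two relations collapse analogously.

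The main obstacle is not conceptual but organizational: across all these brackets one must systematically track which terms survive at order $h^0$ and which are genuinely suppressed, and one must remember to exclude every identity that would involve the nonexistent $\bar{\sigma}_1$---precisely the qualification the proposition carries, since no flow $\bar{\sigma}_1$ is produced by the nonisospectral construction in \eqref{dmkp-non-com}. Beyond that, the procedure is uniform across the three pieces, which is why the paper illustrates one case explicitly and is content to state the rest.
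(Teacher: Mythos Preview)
Your proposal is correct and follows essentially the same approach as the paper: use the leading-order expansions \eqref{deg:K}--\eqref{deg:S} together with Proposition~\ref{prop:deg-2} to extract the $h^0$ part of each discrete identity, observing that the extra $h$-weighted terms on the right-hand sides drop out. The paper works out only the one bracket \eqref{K-S-alg-2} explicitly and then states the proposition without further detail, so your case-by-case treatment of the remaining brackets and of the symmetries $\bar{\tau}^k_m \to \tau^k_m$ simply fills in what the paper leaves implicit.
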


\subsubsection{Squared eigenfunction symmetry constraint}\label{sec-5-2-3}

We can assume in our scheme of continuum limits, there are
\[a_n=q(x,y,t_m), ~~ b_n=r(x,y,t_m)\]
and $\deg a_n=\deg b_n=0$.
It then follows from \eqref{dmkp-L-com} that
\begin{equation*}
\bar{L}=h^{-1}(1+h a_n b_n)\Delta+ a_n \Delta^{-1} b_n \Delta
=\partial_x + q\partial_x^{-1}r\partial_x +O(h).
\end{equation*}
Next, in the spectral problem \eqref{delta-phi}, we assume
$\bar\varphi_{1,n}=\varphi_1(x,y,t_m), \,\bar\varphi_{2,n}=\varphi_2(x,y,t_m)$. Then
the spectral problem \eqref{KN-sp1} can be recovered from \eqref{delta-phi} in the continuum limit.
In addition, for the recursion operator $\bar T$ one can check that
\[\bar T= T +O(h).\]
Thus, the spectral problem and CLL hierarchy obtained from the mKP system
with the squared symmetry constraint are also obtained from the continuum limits.

\section{Conclusions}\label{sec-6}

In this paper we gave a new formulation of the squared eigenfunction symmetry constraint
to convert the D$\Delta$mKP system to the discrete CLL spectral problem and
the differential-difference CLL hierarchy.
This enabled us to introduce a unified continuum limit scheme and build correspondence
between the continuous and the differential-difference mKP hierarchies and their integrable structures.
We also introduced nonisospectral flows and obtained
two sets of  symmetries of the D$\Delta$mKP hierarchy and their Lie algebraic structure,
which recover all their continuous counterparts in the continuum limits.
With such correspondence, we have provided a relatively complete profile of the D$\Delta$mKP.

Note that the two different formulations of independent variables in the squared eigenfunction symmetry constraint
convert the D$\Delta$mKP system to the relativistic Toda ( cf.\cite{CZZ-2021}) and
the differential-difference CLL, respectively.
It would be worthy to understand the link between these two
(1+1)-dimensional differential-difference integrable systems.
In addition, the discrete CLL spectral problem \eqref{sd-CLL-sp}
is a Darboux transformation of the continuous CLL spectral problem (see Appendix \ref{App-2}).
In this context, one can investigate the connections between the
discrete CLL spectral problem and the fully discrete (potential) mKP equation,
which may lead to new algebraic geometry solution to the later, cf.\cite{XCZ-JPA-2022}.

\vskip 20pt
\subsection*{Acknowledgements}
This project is supported by the NSF of China (Nos. 12271334, 12126352, 12126343),
the Science and technology innovation plan of Shanghai (No. 20590742900)
and the Fundamental Research Funds for the Inner Mongolia Normal University (No. 2022JBBJ008).

\appendix

\section{Calculation of formula \eqref{vs-rhs}}\label{App-1}

By substitution of \eqref{4.37}  the right-hand side of \eqref{dmkp-vs}
and take $s=i$ yields
\begin{align*}
&\eqref{dmkp-vs}|_{\mathrm{r.h.s.}} \\
=& (-1)^sh^{-s}(ha_n b_n+1)(a_{n+1}-a_n)E^{-s-1}\Delta^sb_n
+(-1)^sh^{-s}a_nE^{-s-1}\Delta^s(ha_nb_n+1)(b_n-b_{n-1})\\
&~~ -(ha_n b_n+1)\Delta(-1)^sh^{-s}a_nE^{-s-1}\Delta^sb_n\\
&~~ +\sum_{i=1}^{s}(-1)^{s+1}\mathrm{C}_s^{i-1}h^{-s}
(a_nE^{-i-1}\Delta^i b_n) E^{-s-1}\Delta^{s-i+1}(ha_n b_n+1)\\
=&(-1)^s h^{-s}(A+ B+C),
\end{align*}
where $\eqref{dmkp-vs}|_{\mathrm{r.h.s.}}$ stands for the right-hand side of \eqref{dmkp-vs}, and
\begin{align*}
A= &  (\Delta a_n)E^{-s-1}\Delta^sb_n-\Delta  a_nE^{-s-1}\Delta^sb_n+a_nE^{-s-2}\Delta^{s+1}b_n,\\
B=&ha_n b_n(\Delta a_{n})E^{-s-1}\Delta^sb_n -ha_n b_n\Delta a_nE^{-s-1}\Delta^sb_n
+ha_nE^{-s-1}\Delta^sa_nb_n(\Delta b_{n-1}),\\
C=&-h\sum_{i=1}^{s} \mathrm{C}_s^{i-1} (a_nE^{-i-1}
\Delta^ib_n)E^{-s-1}\Delta^{s-i+1}a_n b_n.
\end{align*}

For the terms in $A$, using the formula
\begin{equation}\label{b.1}
\Delta f_n g_n=(Ef_n)\Delta g_n + (\Delta f_n) g_n,
\end{equation}
the first two terms together yield
$$-(Ea_n)E^{-s-1}\Delta^{s+1}b_n=-E a_nE^{-s-2}\Delta^{s+1}b_n.$$
Thus we have
\begin{equation*}
A=-\Delta a_nE^{-s-2}\Delta^{s+1}b_n.
\end{equation*}

For the terms in $B$, again, by using formula \eqref{b.1},  the first two terms yield
\begin{equation*}
-ha_n b_n(E a_{n})E^{-s-1}\Delta^sb_n=-ha_n b_n E a_{n} E^{-s-2}\Delta^sb_n.
\end{equation*}
For the third term in $B$, using \eqref{2.7} we have
\begin{align*}
& ha_nE^{-s-1}\Delta^sa_nb_n(\Delta b_{n-1})
=ha_n\Delta^s(\Delta b_{n-s-2})(E^{-s-1}a_nb_n)  \\
=&h\sum_{i=0}^{s} \mathrm{C}_s^i a_n (\Delta^{i+1} b_{n-i-2})(\Delta^{s-i}E^{-s-1}a_nb_n)\\
=&h\sum_{i=0}^{s} \mathrm{C}_s^ia_n(\Delta^{i+1} E^{-i-2}b_n)(\Delta^{s-i}E^{-s-1}a_nb_n).
\end{align*}
Replacing $i$ with $i-1$ in the above equation we have
\begin{align*}
& ha_nE^{-s-1}\Delta^sa_nb_n(\Delta b_{n-1}) \\
=&h\sum_{i=1}^{s+1} \mathrm{C}_s^{i-1}a_n(\Delta^i E^{-i-1}b_n)(\Delta^{s-i+1}E^{-s-1}a_nb_n)\\
=&h(a_n\Delta^{s+1}E^{-s-2}b_n)(E^{-s-1}a_nb_n) -C.
\end{align*}
Thus,
\begin{align*}
&\eqref{dmkp-vs}|_{\mathrm{r.h.s.}} \\
=&(-1)^s h^{-s}(A+ B+C)\\
=& (-1)^{s+1}h^{-s}[ (ha_n b_n+1)E(a_nE^{-s-2}\Delta^{s+1}b_n)
- (E^{-s-1}(ha_n b_n+1))(a_nE^{-s-2}\Delta^{s+1}b_n)],
\end{align*}
i.e. the equation \eqref{4.37}.

\section{On the discrete CLL spectral problem and equation}\label{App-2}

The discrete CLL spectral problem \eqref{sd-CLL-sp}, i.e.
\begin{equation}\label{sd-CLL-sp-app}
\left(\begin{array}{c} \psi_{1,n+1}\\  \psi_{2,n+1} \end{array}\right)=
\left(\begin{array}{cc}
-h\eta^2+1+h a_nb_{n+1}&   h  a_n\\
h \eta^2 b_{n+1}&1
\end{array}\right)
\left(\begin{array}{c}  \psi_{1,n}\\  \psi_{2,n} \end{array}\right),
\end{equation}
was found as early as in 1983 in \cite{DJM-1983},
where we have replaced $b_n$ with $b_{n+1}$.
It is compatible with the continuous spectral problem
\begin{equation}\label{c-CLL-sp-app}
\left(\begin{array}{c} \psi_{1,n}\\ \psi_{2,n} \end{array}\right)_{\bar x}=
\left(\begin{array}{cc} -\eta^2+a_nb_n & \eta a_n\\
 \eta  b_n& 0 \end{array}\right)
\left(\begin{array}{c} \psi_{1,n}\\ \psi_{2,n} \end{array}\right).
\end{equation}
The later was also given in \cite{DJM-1983} to lead to the unreduced CLL equations.
%It is notable  that the spectral problem \eqref{delta-psi} is a direct discretization of \eqref{c-CLL-sp-app}
%by just replacing $(\psi_{i,n})_x $ with $(\psi_{i,n+1}-\psi_{n})/h$ for $i=1,2$.
In addition, \eqref{c-CLL-sp-app} is also gauge equivalent to the usual CLL spectral problem (cf.\cite{WS-JPSJ-1983})
\begin{equation}\label{c-CLL-sp-app-2}
\left(\begin{array}{c} \phi_{1,n}\\ \phi_{2,n} \end{array}\right)_{\bar x}=
\left(\begin{array}{cc}
\frac{1}{2}(-\eta^2+a_nb_n) & \eta a_n\\
 \eta  b_n& -\frac{1}{2}(-\eta^2+a_nb_n) \end{array}\right)
\left(\begin{array}{c} \phi_{1,n}\\ \phi_{2,n} \end{array}\right).
\end{equation}
by taking $\phi_j=\psi_j e^{\frac{1}{2}(\eta^2\bar x^2 -\partial_{\bar x}^{-1}a_nb_n)}$.
The compatibility between \eqref{sd-CLL-sp-app} and \eqref{c-CLL-sp-app}
gives rise to a differential-difference CLL equation
\begin{equation*}
	a_{n,\bar{x}}=h^{-1}(1+h a_nb_n)(a_{n+1}-a_n),~~b_{n,\bar{x}}=h^{-1}(1+h a_nb_n)(b_n-b_{n-1}),
\end{equation*}
i.e. \eqref{4.35}.
Meanwhile, the compatibility allows us to view \eqref{sd-CLL-sp-app} as a
Darboux transformation of the CLL spectral problem \eqref{c-CLL-sp-app}.

It is well known that transformation is one of means of discretization \cite{LB-PNAS-1980}.
Based on the above relations, if we denote the Darboux matrix in \eqref{sd-CLL-sp-app} as
\begin{equation}
M(\gamma, h, a_{n,m}, b_{n+1,m})=
\left(\begin{array}{cc}
-h\gamma +1+h a_{n,m}b_{n+1,m}& h  a_{n,m}\\
h \gamma  b_{n+1,m}& 1
\end{array}\right),
\end{equation}
then the compatibility of
\[\Psi_{n+1,m}=M(\gamma, p, a_{n,m}, b_{n+1,m}) \Psi_{n,m},~~
\Psi_{n,m+1}=M(\gamma, q, a_{n,m}, b_{n,m+1}) \Psi_{n,m},
\]
where $\Psi_{n,m}=(\psi_1(n,m),\psi_2(n,m))^T$, yields the discrete CLL equation \cite{DJM-1983}
\begin{align*}
& p a_{n,m+1}-q a_{n+1,m}-(p-q) a_{n,m}=pq(a_{n+1,m}-a_{n,m+1})a_{n,m}b_{n+1,m+1},\\
& q b_{n,m+1}-q b_{n+1,m}+(p-q) b_{n+1,m+1}=pq(b_{n+1,m}-b_{n,m+1})a_{n,m}b_{n+1,m+1}.
\end{align*}

It is also notable  that the discrete CLL spectral problem \eqref{delta-psi} is a direct discretization
of the continuous one \eqref{c-CLL-sp-app}
by just replacing $(\psi_{i,n})_x $ with $(\psi_{i,n+1}-\psi_{i,n})/h$ for $i=1,2$.
This is interesting because the Merola-Ragnisco-Tu spectral problem,
which can be obtained  from the squared-eigenfunction-symmetry-constrained D$\Delta$KP system,
can also be viewed as a direct discretization as well as a Darboux transformation
of the AKNS spectral problem, see Eq.(1.5) and Appendix B in \cite{Chen-JNMP-2017}.

\end{document}